\newcommand{\cel}[1]{\lceil #1\rceil}
\newcommand{\gdeg}[2]{\mbox{deg}_{#1}(#2)}
\newcommand{\lln}{\log \log n}
\newcommand{\llln}{\log \log \log n}
\newcommand{\PROB}{\mathbf{P}}
\renewenvironment{proof}{\noindent {\bf Proof:}\ }{\qed\par\vskip 2mm}
\title{Lessons from the Congested Clique Applied to MapReduce
\thanks{This work is supported in part by National Science Foundation grant
CCF 1318166. E-mail: \texttt{[james-hegeman,sriram-pemmaraju]@uiowa.edu}.}}
\author{James W.~Hegeman\and Sriram V.~Pemmaraju}
\date{}
\institute{Department of Computer Science\\
The University of Iowa\\
Iowa City, Iowa 52242-1419, USA}
\begin{document}
\maketitle

\begin{abstract}
The main results of this paper are (I) a simulation algorithm which, under quite
general constraints, transforms algorithms running on the Congested Clique into
algorithms running in the MapReduce model, and (II) a distributed
$O(\Delta)$-coloring algorithm running on the Congested Clique which has an
expected running time of $O(1)$ rounds, if $\Delta \geq \Theta(\log^4 n)$;
and $O(\llln)$ rounds otherwise. Applying the simulation theorem to the
Congested Clique $O(\Delta)$-coloring algorithm yields an $O(1)$-round
$O(\Delta)$-coloring algorithm in the MapReduce model.

Our simulation algorithm illustrates a natural correspondence between per-node
bandwidth in the Congested Clique model and memory per machine in the MapReduce
model. In the Congested Clique (and more generally, any network in the
$\mathcal{CONGEST}$ model), the major impediment to constructing fast algorithms
is the $O(\log n)$ restriction on message sizes. Similarly, in the MapReduce
model, the combined restrictions on memory per machine and total system memory
have a dominant effect on algorithm design. In showing a fairly general
simulation algorithm, we highlight the similarities and differences between
these models. 
\end{abstract}

\section{Introduction}
\label{sect:Intro}

The $\mathcal{CONGEST}$ model of distributed computation is a synchronous,
message-passing model in which the amount of information that a node can
transmit along an incident edge in one round is restricted to $O(\log n)$ bits
\cite{peleg2000distributed}. As the name suggests, the $\mathcal{CONGEST}$ model
focuses on \textit{congestion} as an obstacle to distributed computation.
Recently, a fair amount of research activity has focused on the design of
distributed algorithms in the $\mathcal{CONGEST}$ model assuming that the
underlying communication network is a \textit{clique}
\cite{berns2012facloc,DolevLenzenPeled,lenzen2013routing,Patt-ShamirTeplitsky}.
Working with such a \textit{Congested Clique} model completely removes from the
picture obstacles that might be due to nodes having to acquire information
from distant nodes (since any two nodes are neighbors), thus allowing us to
focus on the problem of congestion alone. Making this setting intriguing is also
the fact that no non-trivial lower bounds for computation on a Congested Clique
have been proved. In fact, in a recent paper, Lenzen \cite{lenzen2013routing}
showed how to do load-balancing deterministically so as to route up to $n^2$
messages (each of size $O(\log n)$) in $O(1)$ rounds in the Congested Clique
setting, provided each node is the source of at most $n$ messages and the sink
for at most $n$ messages. Thus a large volume of information can be moved around
the network very quickly and any lower-bound approach in the Congested Clique
setting will have to work around Lenzen's routing-protocol result. While Lotker
et al.~\cite{lotker2006distributed} mention overlay networks as a possible
practical application of distributed computation on a Congested Clique, as of
now, research on this model is largely driven by a theoretical interest in
exploring the limits imposed by congestion.

\textit{MapReduce} \cite{DeanGhemavat} is a tremendously popular
parallel-programming framework that has become the tool of choice for
large-scale data analytics at many companies such as Amazon, Facebook, Google,
Yahoo!, etc., as well as at many universities. While the actual time-efficiency
of a particular MapReduce-like implementation will depend on many low-level
technical details, Karloff et al.~\cite{KarloffSuriVassilvitskii} have attempted
to formalize key constraints of this framework to propose a
\textit{MapReduce model} and an associated MapReduce complexity class
($\mathcal{MRC}$). Informally speaking, a problem belongs to $\mathcal{MRC}$ if
it can be solved in the MapReduce framework using: (i) a number of machines that
is substantially sublinear in the input size, i.e., $O(n^{1-\epsilon})$ for
constant $\epsilon > 0$, (ii) memory per machine that is substantially sublinear
in the input size, (iii) $O(\mbox{poly}(\log n))$ number of map-shuffle-reduce
rounds, and (iv) polynomial-time local computation at each machine in each
round. Specifically, a problem is said to be in $\mathcal{MRC}^i$ if it can be
solved in $O(\log^i n)$ map-shuffle-reduce rounds, while maintaining the other
constraints mentioned above. Karloff et al.~\cite{KarloffSuriVassilvitskii} show
that \textit{minimum spanning tree} (MST) is in $\mathcal{MRC}^0$ (i.e., MST
requires $O(1)$ map-shuffle-reduce rounds) on non-sparse instances. Following up
on this, Lattanzi et al.~\cite{LattanziFiltering} show that other problems such
as \textit{maximal matching} (with which the distributed computing community is
very familiar) are also in $\mathcal{MRC}^0$ (again, for non-sparse instances).
We give a more-detailed description of the MapReduce model in Section
\ref{subsect:Models}.

The volume of communication that occurs in a Shuffle step can be quite
substantial and provides a strong incentive to design algorithms in the
MapReduce framework that use very few map-shuffle-reduce steps. As motivation
for their approach (which they call \textit{filtering}) to designing MapReduce
algorithms, Lattanzi et al.~\cite{LattanziFiltering} mention that past attempts
to ``shoehorn message-passing style algorithms into the framework'' have led to
inefficient algorithms. While this may be true for distributed message-passing
algorithms in general, we show in this paper that algorithms designed in the
Congested Clique model provide many lessons on how to design algorithms in the
MapReduce model. We illustrate this by first designing an expected-$O(1)$-round
algorithm for computing a $O(\Delta)$-coloring for a given $n$-node graph with
maximum degree $\Delta \geq \log^4 n$ in the Congested Clique model. We then
simulate this algorithm in the MapReduce model and obtain a corresponding
algorithm that uses a constant number of map-shuffle-reduce rounds to compute an
$O(\Delta)$-coloring of the given graph. While both of these results are new,
what we wish to emphasize in this paper is the \textit{simulation} of
Congested Clique algorithms in the MapReduce model. Our simulation can also be
used to obtain efficient MapReduce-model algorithms for other problems such as
\textit{$2$-ruling sets} \cite{berns2012facloc} for which an
expected-$O(\lln)$-round algorithm on a Congested Clique was recently developed.

\subsection{Models}
\label{subsect:Models}

\paragraph{The Congested Clique Model.} The Congested Clique is a variation on
the more general $\mathcal{CONGEST}$ model. The underlying communication network
is a size-$n$ clique, i.e., every pair of nodes can directly communicate with
each other. Computation proceeds in synchronous rounds and in each round a node
(i) receives all messages sent to it in the previous round; (ii) performs
unlimited local computation; and then (iii) sends a, possibly distinct, message
of size $O(\log n)$ to each other node in the network. We assume that nodes have
distinct IDs that can each be represented in $O(\log n)$ bits. We call this the
\textit{Congested Clique} model.

Our focus in this paper is graph problems and we assume that the input is a
graph $G$ that is a spanning subgraph of the communication network. Initially,
each node in the network knows who its neighbors are in $G$. Thus knowledge of
$G$ is distributed among the nodes of the network, with each node having a
particular local view of $G$. Note that $G$ can be quite dense (e.g., have
$\Omega(n^2)$ edges) and therefore any reasonably fast algorithm for the problem
will have to be ``truly'' distributed in the sense that it cannot simply rely on
shipping off the problem description to a single node for local computation.

\paragraph{The MapReduce Model.} Our description of the MapReduce model borrows
heavily from the work of Karloff et al.~\cite{KarloffSuriVassilvitskii} and
Lattanzi et al.~\cite{LattanziFiltering}. Introduced by Karloff et
al.~\cite{KarloffSuriVassilvitskii}, the MapReduce model is an abstraction of
the popular MapReduce framework \cite{DeanGhemavat} implemented at Google and
also in the popular Hadoop open-source project by Apache.

The basic unit of information in the MapReduce model is a $(key, value)$-pair.
At a high level, computation in this model can be viewed as the application of a
sequence of functions, each taking as input a collection of $(key, value)$-pairs
and producing as output a new collection of $(key, value)$-pairs. MapReduce
computation proceeds in rounds, with each round composed of a map phase,
followed by a shuffle phase, followed by a reduce phase. In the map phase,
$(key, value)$ pairs are processed individually and the output of this phases is
a collection of $(key, value)$-pairs. In the shuffle phase, these
$(key, value)$-pairs are ``routed'' so that all $(key, value)$-pairs with the
same $key$ end up together. In the last phase, namely the reduce phase, each key
and all associated values are processed together. We next describe each of the
three phases in more detail.
\begin{itemize}
\item The computation in the Map phase of round $i$ is performed by a collection
of \textit{mappers}, one per $(key, value)$ pair. In other words, each mapper
takes a $(key, value)$ pair and outputs a collection of $(key, value)$ pairs.
Since each mapper works on an individual $(key, value)$ pair and the computation
is entirely ``stateless'' (i.e., not dependent on any stored information from
previous computation), the mappers can be arbitrarily distributed among
machines. In the MapReduce model, keys and values are restricted to the word
size of the system, which is $\Theta(\log n)$. Because of this restriction, a
mapper takes as input only a constant number of words.
\item In the \textit{Shuffle} phase of round $i$, which runs concurrently with
the Map phase (as possible), key-value pairs emitted by the mappers are moved
from the machine that produced them to the machine which will run the reducer
for which they are destined; i.e., a key-value pair $(k, v)$ emitted by a mapper
is physically moved to the machine which will run the reducer responsible for
key $k$ in round $i$. The
Shuffle phase is implemented entirely by the underlying MapReduce framework 
and we generally ignore the Shuffle phase and treat data movement from one 
machine to another as a part of the Map phase.
\item In the \textit{Reduce} phase of round $i$, 
reducers operate on the collected
key-value pairs sent to them; a reducer is a function taking as input a pair
$(k, \{v_{k,j}\}_j)$, where the first element is a key $k$ and the second is a
multiset of values $\{v_{k,j}\}_j$ which comprises all of the values contained
in key-value pairs emitted by mappers during round $i$ and having key $k$.
Reducers emit a multiset of key-value pairs $\{(k, v_{k,l})\}_l$, where the key
$k$ in each pair is the same as the key $k$ of the input.
\end{itemize}
\noindent
For our purposes, the concepts of a machine and a reducer are interchangeable,
because reducers are allowed to be ``as large'' as a single machine on which
they compute. 

The MapReduce model of Karloff et al.~\cite{KarloffSuriVassilvitskii} tries to
make explicit three key resource constraints on the MapReduce system.
Suppose that the problem input has size $n$ (note that this is \textit{not} referring
to the input size of a particular reducer or mapper).
We assume, as do Karloff et
al.~\cite{KarloffSuriVassilvitskii} and Lattanzi et
al.~\cite{LattanziFiltering}, that memory is measured in $O(\log n)$-bit-sized
words.

\begin{enumerate}
\item Key-sizes and value-sizes are restricted to a $\Theta(1)$ multiple of the word size
of the system. Because of this restriction, a mapper
takes as input only a constant number of words.

\item Both mappers and reducers are restricted to using space consisting of
$O(n^{1-\epsilon})$ words of memory, and time which is polynomial in $n$.

\item The number of machines, or equivalently, the number of reducers, is
restricted to $O(n^{1-\epsilon})$. 
\end{enumerate}

\noindent 
Given these constraints, the goal is to design MapReduce algorithms that run
in very few -- preferably constant -- number of rounds.
For further details on the justifications for these constraints, see
\cite{KarloffSuriVassilvitskii}.

Since our focus is graph algorithms, we can restate the above constraints more
specifically in terms of graph size. Suppose that an $n$-node graph $G = (V, E)$
is the input. Following Lattanzi et al.~\cite{LattanziFiltering}, we assume that
each machine in the MapReduce system has memory $\eta = n^{1+\epsilon}$ for
$\epsilon \geq 0$. Since $n^{1+\epsilon}$ needs to be ``substantially''
sublinear in the input size, we assume that the number of edges $m$ of $G$ is
$\Omega(n^{1+c})$ for $c > \epsilon$.
Thus the MapReduce results in this paper are for non-sparse graphs.

\subsection{Contributions}
\label{subsect:Contributions}

The main contribution of this paper is to show that fast algorithms in the
Congested Clique model 
can be translated via a simulation theorem into fast algorithms in the MapReduce
framework. As a case study, we design a fast graph-coloring algorithm running in
the Congested Clique model and then apply the simulation theorem to this
algorithm and obtain a fast MapReduce algorithm. Specifically, given an $n$-node
graph $G$ with maximum degree $\Delta \geq \log^4 n$, we show how to compute an
$O(\Delta)$-coloring of $G$ in expected $O(1)$ rounds in the Congested Clique model. 
We also present an algorithm for small $\Delta$; for $\Delta < \log^4 n$
we present an algorithm that computes a $\Delta+1$ coloring in 
$O(\llln)$ rounds with high probability on a Congested Clique. 
The implication of this result to the MapReduce model (via the
simulation theorem) is that for any $n$-node graph with $\Omega(n^{1+c})$ edges, for
constant $c > 0$, there is a MapReduce algorithm that runs in $O(1)$
map-shuffle-reduce rounds using $n^{1+\epsilon}$ memory per machine, for
$0 \leq \epsilon < c$ and $n^{c-\epsilon}$ machines. Note that the even when
using $n$ memory per machine and $n^c$ machines the algorithm still takes $O(1)$
rounds. This is in contrast to examples in Lattanzi et
al.~\cite{LattanziFiltering} such as maximal matching which require $O(\log n)$
rounds if the memory per machine is $n$.

The coloring algorithms in both models are new and faster than any known in the
respective models, as far as we know. However, the bigger point of this paper is
the connection between models that are studied in somewhat different
communities.

\subsection{Related Work}
\label{subsect:Related}

The earliest interesting algorithm in the Congested Clique model is an
$O(\lln)$-round deterministic algorithm to compute a minimum spanning tree (MST), due
to Lotker et al.~\cite{lotker2006distributed}. Gehweiler et
al.~\cite{GehweilerSPAA2006} presented a random $O(1)$-round algorithm in the
Congested Clique model that produced a constant-factor approximation algorithm
for the \textit{uniform} metric facility location problem. Berns et
al.~\cite{berns2012facloc,berns2012arxiv} considered the more-general
non-uniform metric facility location in the Congested Clique model and presented
a constant-factor approximation running in expected $O(\lln)$ rounds. Berns et
al.~reduce the metric facility location problem to the problem of computing a
$2$-ruling set of a spanning subgraph of the underlying communication network
and show how to solve this in $O(\lln)$ rounds in expectation. In 2013, Lenzen
presented a routing protocol to solve a problem called an
\textit{Information Distribution Task} \cite{lenzen2013routing}. The setup for
this problem is that each node $i \in V$ is given a set of $n' \leq n$ messages,
each of size $O(\log n)$, $\{m_i^1, m_i^2, \ldots, m_i^{n'}\}$, with
destinations $d(m_i^j) \in V$, $j \in \{1, 2, \ldots, n'\}$. Messages are
globally lexicographically ordered by their source $i$, destination $d(m_i^j)$,
and $j$. Each node is also the destination of at most $n$ messages. Lenzen's
routing protocol solves the Information Distribution Task in $O(1)$ rounds.

Our main sources of reference on the MapReduce model and for graph algorithms in
this model are the work of Karloff et al.~\cite{KarloffSuriVassilvitskii} and
Lattanzi et al.~\cite{LattanziFiltering} respectively. Besides these, the work
of Ene et al.~\cite{EneClusteringMapReduce} on algorithms for clustering in
MapReduce model and the work of Kumar et al.~\cite{KumarGreedyMapReduce} on
greedy algorithms in the MapReduce model are relevant.

\section{Coloring on the Congested Clique}
\label{sect:CongestedClique}

In this section we present an algorithm, running in the Congested Clique model,
that takes an $n$-node graph $G$ with maximum degree $\Delta$ and computes an
$O(\Delta)$-coloring in expected $O(\llln)$ rounds. In fact, for high-degree
graphs, i.e., when $\Delta \geq \log^4 n$, our algorithm computes an
$O(\Delta)$-coloring in $O(1)$ rounds. This algorithm, which we call Algorithm
\textsc{HighDegCol}, is the main contribution of this section. For graphs with
maximum degree less than $\log^4 n$ we appeal to an already-known coloring
algorithm that computes a $(\Delta + 1)$ coloring in $O(\log \Delta)$ rounds
and then modify its implementation so that it runs in $O(\llln)$ rounds
on a Congested Clique.

We first give an overview of Algorithm \textsc{HighDegCol}. The reader is
advised to follow the pseudocode given in Algorithm \ref{alg:DistColorCC} as
they read the following. The algorithm repeatedly performs a simple random trial
until a favorable event occurs. Each trial is independent of previous trials.
The key step of Algorithm \textsc{HighDegCol} is that each node picks a
\textit{color group} $k$ from the set $\{1, 2, \ldots, \cel{\Delta / \log n}\}$
independently and uniformly at random (Step 4). We show (in Lemma
\ref{alg:DistColorCC}) that the expected number of edges in the graph $G_k$ induced by
nodes in color group $k$ is at most $O(\frac{n \log^2 n}{\Delta})$. Of course, some of
the color groups may induce far more edges and so we define a \textit{good}
color group as one that has at most $n$ edges. The measure of whether the
random trial has succeeded is the number of good color groups. If most of the color
groups are good, i.e., if at most $2 \log n$ color groups are not good then the
random trial has succeeded and we break out of the loop. We then transmit each
graph induced by a good color group to a distinct node in constant rounds using
Lenzen's routing scheme \cite{lenzen2013routing} (Step 11). Note that this is
possible because every good color group induces a graph that requires $O(n)$
words of information to completely describe. Every node that receives a graph
induced by a good color group locally computes a proper coloring of the graph
using one more color than the maximum degree of the graph it receives (Step 12).
Furthermore, every such coloring in an iteration employs a distinct palette of
colors. Since there are very few color groups that are not good, we are able to
show that the residual graph induced by nodes not in good color groups has
$O(n)$ edges. As a result, the residual graph can be communicated in its
entirety to a single node for local processing. This completes the coloring of
all nodes in the graph.

\begin{algo}[h]
\textbf{Input:} An $n$-node graph $G = (V, E)$, of maximum degree $\Delta$\\
\textbf{Output:} A proper node-coloring of $G$ using $O(\Delta)$ colors
{\small
\begin{tabbing}
......\=a....\=b....\=c....\=d....\=e....\=f....\=g....\=h....\=i....\=j...\kill
1.\>Each node $u$ in $G$ computes and broadcasts its degree to every other node
$v$ in $G$.\\
2.\>\textbf{If} $\Delta \leq \log^4 n$ \textbf{then} use Algorithm
\textsc{LowDegCol} instead.\\
3.\>\textbf{while} $true$ \textbf{do}\\
4.\>\>Each node $u$ chooses a \textit{color group} $k$ from the set
$\{1, 2, \ldots, \cel{\Delta / \log n}\}$ independently\\ 
\>\>\>and uniformly at random.\\
5.\>\>Let $G_k$ be the subgraph of $G$ induced by nodes of color group $k$.\\
6.\>\>Each node $u$ sends its choice of color group to all neighbors in
$G$.\\
7.\>\>Each node $u$ computes its degree within its own color-group graph
$G_{k_u}$ and sends its\\ 
\>\>\>color group and degree within color group to node 1.\\
8.\>\>Node 1, knowing the partition of $G$ into color groups and also
knowing the degree of\\
\>\>\>every node $u$ ($u \in G_k$) within the induced subgraph $G_k$, can compute the
number\\
\>\>\>of edges in $G_k$ for each $k$. Thus node 1 can determine which color-group graphs $G_k$\\ 
\>\>\>are \textit{good}, i.e., have at most $n$ edges.\\
9.\>\>If at most $2 \log n$ color-group graphs are not good, node 1 broadcasts a ``break''\\
\>\>\>message to all nodes causing them to \textbf{break} out of loop;\\
\>\textbf{endwhile}\\
10.\>Node 1 informs every node $u$ in a good group of the fact that $u$'s color 
group is good\\
11.\>Using Lenzen's routing protocol, distribute all information about all good
color-group\\ 
\>\>graphs $G_k$ to distinct nodes of $G$.\\
12.\>For each good $G_k$, the recipient of $G_k$ computes a coloring of $G_k$
using $\Delta(G_k)+1$ colors.\\
\>\>The color palette used for each $G_k$ is distinct.\\
13.\>The residual graph $\overline{G}$ of uncolored nodes has size $O(n)$ with
high probability, and can thus\\
\>\>be transmitted to a single node (for local proper coloring) in
$O(1)$ rounds.\\
14.\>Each node that locally colors a subgraph informs each  node in the subgraph
the color it has\\
\>\>been assigned.
\end{tabbing}}
\caption{\textsc{HighDegCol}}
\label{alg:DistColorCC}
\end{algo}

We now analyze Algorithm \textsc{HighDegCol} and show that (i) it terminates in
expected-$O(1)$ rounds and (ii) it uses $O(\Delta)$ colors. Subsequently, we
discuss an $O(\llln)$ algorithm to deal with the small $\Delta$ case.

\begin{lemma}
For each $k$, the expected number of edges in $G_k$ is
$\frac{n \log^2 n}{2 \Delta}$.
\label{lemma:group_size}
\end{lemma}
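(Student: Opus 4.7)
The plan is a short linearity-of-expectation argument. For a fixed color group $k \in \{1, 2, \ldots, \lceil \Delta/\log n \rceil\}$ and a fixed edge $e = \{u,v\} \in E(G)$, I would let $X_e$ be the indicator that both endpoints of $e$ chose color group $k$ in Step~4. Since the two endpoints choose independently and uniformly from a set of size $\lceil \Delta/\log n \rceil$, we have
\[
\PROB[X_e = 1] = \left(\frac{1}{\lceil \Delta/\log n \rceil}\right)^{\!2} \;\leq\; \left(\frac{\log n}{\Delta}\right)^{\!2}.
\]

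Next, since $G$ has maximum degree $\Delta$, the total number of edges satisfies $|E(G)| \leq n\Delta/2$. Letting $|E(G_k)| = \sum_{e \in E(G)} X_e$, linearity of expectation then gives
\[
\EXP[|E(G_k)|] \;=\; \sum_{e \in E(G)} \PROB[X_e=1] \;\leq\; \frac{n\Delta}{2} \cdot \frac{\log^2 n}{\Delta^2} \;=\; \frac{n \log^2 n}{2\Delta},
\]
which is the claimed bound. There is no real obstacle here: the only mild point to mention is that one should use the inequality $1/\lceil \Delta/\log n \rceil \leq \log n/\Delta$ to absorb the ceiling, and that the statement should be read as an upper bound (the exact expected value is $|E(G)|/\lceil \Delta/\log n\rceil^2$, but $n\log^2 n/(2\Delta)$ is the clean bound needed in the subsequent analysis of \textsc{HighDegCol}). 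Independence of the two endpoints' choices is guaranteed directly by Step~4, so no further probabilistic machinery is required.
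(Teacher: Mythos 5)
Your proof is correct and follows exactly the paper's argument: bound the per-edge probability by $(\log n/\Delta)^2$, bound $|E(G)|$ by $n\Delta/2$, and apply linearity of expectation. Your added remarks about absorbing the ceiling and reading the lemma as an upper bound are accurate clarifications of what the paper leaves implicit.
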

\begin{proof}
Consider edge $\{u, v\}$ in $G$. The probability that both $u$ and $v$ choose
color group $k$ is at most
$\frac{\log n}{\Delta} \cdot \frac{\log n}{\Delta} = \frac{\log^2 n}{\Delta^2}$.
Since $G$ has at most $\frac{1}{2} \Delta \cdot n$ edges, the expected number of
edges in $G_k$ is at most $\frac{n \log^2 n}{2 \Delta}$.
\end{proof}

\begin{lemma}
The expected number of color-group graphs $G_k$ having more than $n$ edges is at
most $\log n$.
\label{lemma:bad_groups}
\end{lemma}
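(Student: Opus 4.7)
The plan is to combine Lemma \ref{lemma:group_size} with Markov's inequality and linearity of expectation. First I would apply Markov's inequality to a single color group: since $\EXP[|E(G_k)|] \leq \frac{n\log^2 n}{2\Delta}$, the probability that $G_k$ exceeds $n$ edges is at most $\frac{\log^2 n}{2\Delta}$.

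Next, let $X_k$ be the indicator random variable for the event that $G_k$ has more than $n$ edges (i.e., is not good), and let $X = \sum_{k=1}^{\lceil \Delta/\log n \rceil} X_k$. By linearity of expectation,
\[
\EXP[X] \;\leq\; \left\lceil \frac{\Delta}{\log n} \right\rceil \cdot \frac{\log^2 n}{2\Delta} \;\leq\; \frac{\log n}{2} + \frac{\log^2 n}{2\Delta}.
\]
Since we are in the regime $\Delta \geq \log^4 n$ where Algorithm \textsc{HighDegCol} is invoked, the second term is negligible (in fact $o(1)$), so $\EXP[X] \leq \log n$, giving the claim.

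The proof is essentially routine once Lemma \ref{lemma:group_size} is in hand; there is no real obstacle. The only mild subtlety is making sure the ceiling in $\lceil \Delta/\log n \rceil$ does not push the bound above $\log n$, which is why it is useful that the $\Delta \geq \log^4 n$ assumption swallows the $\frac{\log^2 n}{2\Delta}$ slack. I would note explicitly that we are not trying to sum $|E(G_k)|$ directly (which would only recover the trivial bound $|E(G)|$); instead we are summing probabilities of tail events, which is where the savings come from.
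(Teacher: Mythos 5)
Your proof is correct and follows essentially the same route as the paper's: apply Markov's inequality to Lemma \ref{lemma:group_size} to bound the per-group failure probability by $\frac{\log^2 n}{2\Delta}$, then sum over the $\lceil \Delta/\log n\rceil$ groups by linearity of expectation. The only cosmetic difference is in how the ceiling is absorbed (the paper bounds $\lceil \Delta/\log n\rceil \leq 2\Delta/\log n$, while you split off the additive $+1$ term and invoke $\Delta \geq \log^4 n$); both are valid.
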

\begin{proof}
By Lemma \ref{lemma:group_size} and Markov's inequality, the probability that
color group $k$ has more than $n$ edges is at most
$\frac{n \log^2 n}{2 \Delta \cdot n} = \frac{\log^2 n}{2 \Delta}$. Since there
are $\cel{\Delta / \log n}$ groups, the expected number of $G_k$ having more
than $n$ edges is bounded above by
$2 \frac{\Delta}{\log n} \cdot \frac{\log^2 n}{2 \Delta} = \log n$.
\end{proof}

\begin{lemma}
With high probability, every color group has $\frac{5 n \log n}{\Delta}$ nodes.
\label{lemma:no_large_groups}
\end{lemma}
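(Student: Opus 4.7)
The plan is to observe that the number of nodes landing in a fixed color group $k$ is a binomial random variable, apply a standard multiplicative Chernoff bound to it, and then union-bound over color groups. (I am interpreting the statement as asserting that every color group has \emph{at most} $\frac{5 n \log n}{\Delta}$ nodes.)

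First, for each $k \in \{1, \ldots, \cel{\Delta/\log n}\}$, let $X_k$ denote the number of nodes that chose group $k$ in Step 4 of Algorithm \textsc{HighDegCol}. Because each of the $n$ nodes independently selects group $k$ with probability $p = 1/\cel{\Delta/\log n}$, $X_k$ is binomial with mean $\mu_k := \EXP[X_k] = n/\cel{\Delta/\log n} \leq n\log n/\Delta$. Next I would establish the crucial lower bound $\mu_k = \Omega(\log n)$: since $\Delta \geq \log^4 n \geq \log n$, we have $\cel{\Delta/\log n} \leq 2\Delta/\log n$, so $\mu_k \geq n \log n/(2\Delta) \geq (\log n)/2$ using $\Delta \leq n - 1$.

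With these estimates in hand, note that $5 n \log n/\Delta \geq 5\mu_k$. Applying the Chernoff bound $\PROB[X \geq (1+\delta)\mu] \leq (e^\delta/(1+\delta)^{1+\delta})^\mu$ with $\delta = 4$ gives
$$\PROB\!\left[X_k \geq \tfrac{5 n \log n}{\Delta}\right] \leq \PROB[X_k \geq 5\mu_k] \leq \left(\frac{e^4}{5^5}\right)^{\mu_k} \leq e^{-c\mu_k}$$
for some absolute constant $c > 4$. Plugging in $\mu_k \geq (\log n)/2$ yields a per-group failure probability of at most $n^{-c'}$ for a constant $c' > 1$ (after adjusting for the base of the logarithm).

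Finally, I would union-bound over the at most $\cel{\Delta/\log n} \leq n$ color groups: the probability that \emph{some} group exceeds $\frac{5 n \log n}{\Delta}$ nodes is at most $n \cdot n^{-c'}$, which remains polynomially small -- so the claim holds with high probability. The only step that needs any care is the lower bound on $\mu_k$: one must verify that the expected group size does not drop below $\Theta(\log n)$ at the high-$\Delta$ extreme, which follows from the crude bound $\Delta \leq n - 1$. Everything else is a routine Chernoff computation; if a larger exponent in the high-probability guarantee were desired, one could replace the constant $5$ by a slightly larger constant at no real cost.
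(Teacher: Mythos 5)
Your proof is correct and follows essentially the same route as the paper's: a Chernoff bound on the binomial count of nodes in each group followed by a union bound over the $\cel{\Delta/\log n}$ groups, with the key observation (which the paper leaves implicit via $2^{-5n\log n/\Delta} < 2^{-5\log n}$) that $\Delta \leq n$ keeps the exponent at least $\Omega(\log n)$. Your version just makes the lower bound on the mean explicit; no substantive difference.
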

\begin{proof}
The number of color groups is $\cel{\Delta / \log n}$. Thus, for any $k$, the
expected number of nodes in $G_k$, denoted $|V(G_k)|$, is at most
$n \cdot \frac{\log n}{\Delta}$. An application of a Chernoff bound then gives,
for each $k$,
\[\PROB\left(|V(G_k)| > 5 n \cdot \frac{\log n}{\Delta}\right) \leq 
2^{-5 n \cdot \frac{\log n}{\Delta}} < 2^{-5 \log n} = \frac{1}{n^5}\]
Taking the union over all $k$ completes the proof.
\end{proof}

\begin{lemma}
With high probability, no node $u$ in $G$ has more than $5 \log n$ neighbors in
any color group.
\label{lemma:degree_bound}
\end{lemma}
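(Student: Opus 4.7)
The plan is to fix a node $u$ and a color group $k$, and then apply a Chernoff bound to the random variable counting $u$'s neighbors that landed in group $k$, exactly mirroring the proof of Lemma \ref{lemma:no_large_groups}. Finally, I take a union bound over all nodes and all color groups.

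More concretely, first I would fix $u \in V(G)$ and $k \in \{1, 2, \ldots, \lceil \Delta / \log n\rceil\}$. Let $X_{u,k}$ denote the number of neighbors of $u$ in $G$ that choose color group $k$ in Step 4. Since $u$ has at most $\Delta$ neighbors, and each of them picks a color group independently and uniformly at random, $X_{u,k}$ is a sum of at most $\Delta$ independent Bernoulli random variables each with success probability at most $\frac{\log n}{\Delta}$ (recalling that the number of color groups is $\lceil \Delta / \log n \rceil$). Thus $\EXP[X_{u,k}] \leq \log n$.

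Next I would apply a standard multiplicative Chernoff bound of the form $\PROB(X > (1 + \delta) \mu) \leq 2^{-(1+\delta)\mu}$ with $\mu = \log n$ and $\delta = 4$, to conclude that
\[
\PROB\bigl(X_{u,k} > 5 \log n\bigr) \leq 2^{-5 \log n} = \frac{1}{n^5}.
\]
This is essentially the same Chernoff estimate already used in the proof of Lemma \ref{lemma:no_large_groups}, so no new ingredient is needed here.

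Finally I would union-bound over the $n$ choices of $u$ and the at most $n$ choices of $k$ (since $\lceil \Delta / \log n \rceil \leq \Delta \leq n$), obtaining a failure probability of at most $n^2 \cdot n^{-5} = n^{-3}$. Hence with probability at least $1 - n^{-3}$, every node $u$ has at most $5 \log n$ neighbors in every color group. The only subtlety, and arguably the main thing to be careful about, is the independence assumption: $u$'s own choice of color group is irrelevant to $X_{u,k}$, and the choices of $u$'s neighbors are mutually independent by construction of Step 4, so the Chernoff bound applies without any conditioning tricks.
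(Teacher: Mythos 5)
Your proof is correct and follows essentially the same route as the paper: bound the expected number of $u$'s neighbors in group $k$ by $\log n$, apply a Chernoff bound to get failure probability $n^{-5}$ per $(u,k)$ pair, and union-bound over all nodes and groups to obtain overall failure probability at most $n^{-3}$. Your added remark on the independence of the neighbors' choices is a fine (and harmless) elaboration of what the paper leaves implicit.
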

\begin{proof}
Node $u$ has maximum degree $\Delta$, so for any $k$, the expected number of
neighbors of $u$ which choose color group $k$ is bounded above by $\log n$.
Therefore, applying a Chernoff bound gives
\[\PROB\left(|N(u) \cap G_k| > 5 \log n\right) \leq 2^{-5 \log n} = \frac{1}{n^5}\]
Taking the union over all $k$ and $u$ shows that, with probability at least
$1 - \frac{1}{n^3}$, the assertion of the lemma holds.
\end{proof}

\begin{lemma}
The residual graph $\overline{G}$, induced by groups that are good, has $O(n)$ edges, with high probability.
\label{lemma:residual_graph_size}
\end{lemma}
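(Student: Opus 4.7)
The plan is to bound the number of edges in the residual graph $\overline{G}$, which (following the overview and Step 13 of the pseudocode) I read as the subgraph induced by the nodes whose color groups are \emph{not} good; since every node in a good color group receives a color in Steps 11--12, only the nodes from non-good color groups remain uncolored. Because the algorithm only exits the while loop when at most $2 \log n$ color groups fail to be good, $\overline{G}$ is the subgraph of $G$ induced by the union of at most $2 \log n$ color-group vertex sets.

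I would first bound $|V(\overline{G})|$ by appealing to Lemma \ref{lemma:no_large_groups}: w.h.p., every color group contains at most $5 n \log n / \Delta$ nodes, so
\[
|V(\overline{G})| \;\leq\; 2 \log n \cdot \frac{5 n \log n}{\Delta} \;=\; \frac{10\, n \log^2 n}{\Delta}.
\]
Next, I would bound the maximum degree within $\overline{G}$. By Lemma \ref{lemma:degree_bound}, w.h.p.\ every node $u$ has at most $5 \log n$ neighbors in any single color group; summing over the at most $2 \log n$ non-good groups shows that each node of $\overline{G}$ has at most $10 \log^2 n$ neighbors in $\overline{G}$. Combining the two bounds via the handshake inequality gives
\[
|E(\overline{G})| \;\leq\; \tfrac{1}{2}\,|V(\overline{G})| \cdot 10 \log^2 n \;=\; O\!\left(\frac{n \log^4 n}{\Delta}\right),
\]
which is $O(n)$ because Algorithm \textsc{HighDegCol} is only invoked when $\Delta \geq \log^4 n$.

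The main subtlety, and the only real obstacle, is a conditioning issue: the high-probability guarantees of Lemmas \ref{lemma:no_large_groups} and \ref{lemma:degree_bound} refer to a single random trial, whereas $\overline{G}$ is defined by whichever trial happens to exit the while loop. However, by Lemma \ref{lemma:bad_groups} and Markov's inequality, each iteration independently has a constant probability of exiting, so w.h.p.\ the loop terminates within, say, $O(\log n)$ iterations. The error probabilities in Lemmas \ref{lemma:no_large_groups} and \ref{lemma:degree_bound} are $O(1/n^3)$, so a union bound over these iterations still yields the claimed events w.h.p., and the edge bound on $\overline{G}$ follows in the exiting iteration.
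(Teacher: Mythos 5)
Your proof is correct and follows essentially the same route as the paper's: bound the number of nodes of $\overline{G}$ via Lemma \ref{lemma:no_large_groups}, bound the degree within $\overline{G}$ via Lemma \ref{lemma:degree_bound}, multiply, and use $\Delta \geq \log^4 n$. Your final paragraph addressing the conditioning over multiple loop iterations is a point the paper's proof silently glosses over, and handling it via a union bound over the $O(\log n)$ w.h.p.\ iterations is a legitimate (and welcome) tightening rather than a different approach.
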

\begin{proof}
The residual graph $\overline{G}$ is a graph induced by at most $2 \log n$ color
groups, since the algorithm is designed to terminate only when it has performed a trial resulting in
at most $2 \log n$ groups that are not good.
With high probability, no node $u$ in $\overline{G}$ has more than
$5 \log n$ neighbors in any of the (at most) $2 \log n$ color groups that make up $\overline{G}$, so
therefore with high probability no node $u$ has degree greater than $10 \log^2 n$ in $\overline{G}$.
Since $\overline{G}$ has at most $(2 \log n) \cdot \frac{5 n \log n}{\Delta}$
nodes with high probability, it follows that the number of edges in
$\overline{G}$ is at most
\[(2 \log n) \cdot \frac{5 n \log n}{\Delta} \cdot 10 \log^2 n =
\frac{100 n \log^4 n}{\Delta}\]
which is $O(n)$ when $\Delta \geq \log^4 n$.
\end{proof}

\begin{lemma}
Algorithm \textsc{HighDegCol} runs in a constant number of rounds, in
expectation.
\label{lemma:constantCommunication}
\end{lemma}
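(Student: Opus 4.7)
The plan is to decompose the running time into three parts: (i) the cost of a single iteration of the while loop (Steps 3--9), (ii) the expected number of iterations of the loop, and (iii) the cost of the post-loop steps (Steps 10--14). I will argue each is $O(1)$ and conclude by linearity of expectation.

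First I would handle the per-iteration cost. Step 4 is purely local, while Steps 6--9 each reduce to a small number of primitive operations in the Congested Clique. In Step 6 each node transmits a single $O(\log n)$-bit value (its color group) to each of its neighbors in $G$, which fits in one round since in the Congested Clique a node may send a distinct word to every other node per round. Steps 7 and 9 each involve node~1 collecting or broadcasting one word from/to every node, which is $O(1)$ rounds (and Step 8 is local at node 1). Thus each iteration of the loop costs $O(1)$ rounds.

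Next I would bound the expected number of iterations. By Lemma \ref{lemma:bad_groups}, the expected number of color-group graphs with more than $n$ edges is at most $\log n$, so Markov's inequality implies that with probability at least $1/2$ a given trial produces at most $2\log n$ non-good groups and the algorithm breaks. Since trials are independent, the number of iterations is stochastically dominated by a geometric random variable with constant success probability, yielding expected $O(1)$ iterations. Combined with the per-iteration cost, the total expected cost of the loop is $O(1)$ rounds.

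The main obstacle, and the part I would treat most carefully, is Step 11, where we must ship every good color-group graph $G_k$ to a distinct recipient node in $O(1)$ rounds via Lenzen's routing protocol \cite{lenzen2013routing}. Lenzen's routing handles up to $n^2$ messages in $O(1)$ rounds provided each node is the source of at most $n$ messages and the sink of at most $n$ messages. On the receiving side, each recipient absorbs at most one good color-group graph, which by the definition of ``good'' has at most $n$ edges and hence contributes at most $2n$ messages (each edge reported by both endpoints); reassigning recipients if needed, this can be brought down to $n$ per sink. On the sending side, a node $u$ only reports edges incident to itself within its own color group $G_{k_u}$, and by Lemma \ref{lemma:degree_bound} this count is at most $5 \log n$ with high probability, comfortably below $n$. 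Step 10 is a single broadcast from node 1; Step 13 follows from Lemma \ref{lemma:residual_graph_size}, since an $O(n)$-edge graph can be shipped to a single node in $O(1)$ rounds; and Step 14 has each recipient notify at most $O(n)$ nodes of a single color, again $O(1)$ rounds. Summing the three parts gives the claimed expected $O(1)$-round bound.
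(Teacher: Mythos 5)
Your proof is correct and follows essentially the same approach as the paper: $O(1)$ rounds per iteration, expected two iterations via Lemma \ref{lemma:bad_groups} plus Markov's inequality, and $O(1)$ rounds for the post-loop steps using Lenzen's routing and Lemma \ref{lemma:residual_graph_size}. The only difference is that you spell out the source/sink load-balancing for Step 11, which the paper dismisses with ``it is easy to verify''; your added care there is sound.
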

\begin{proof}
By Lemma \ref{lemma:bad_groups} and Markov's inequality, the expected number of
color-group partitioning attempts required before the number of ``bad'' color
groups (i.e., color groups whose induced graphs $G_k$ contain more than $n$
edges) is less than or equal to $2 \log n$ is two. It is easy to verify that
each iteration of the \textbf{while}-true loop requires $O(1)$ rounds of
communication.

When $\Delta \geq \log^4 n$, the residual graph $\overline{G}$ is of size $O(n)$
with high probability, and can thus be communicated in its entirety to a single
node in $O(1)$ rounds. That single node can then color $\overline{G}$
deterministically using $\Delta + 1$ colors and then inform every node of
$\overline{G}$ of its determined color in one further round.
\end{proof}

\begin{lemma}
\label{lemma:deltaColors}
Algorithm \textsc{HighDegCol} uses $O(\Delta)$ colors.
\end{lemma}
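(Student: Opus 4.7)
The plan is a straightforward accounting of colors across the two sources of color usage in \textsc{HighDegCol}: the colorings of the good color-group graphs $G_k$ (Step 12), and the coloring of the residual graph $\overline{G}$ (Step 13). Since the palettes used in Step 12 are pairwise disjoint, and disjoint from the palette used for $\overline{G}$, the total number of colors is the sum of the palette sizes, and it suffices to bound each summand separately.

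First I would handle the good color groups. There are $\lceil \Delta/\log n\rceil$ color groups in total, and Step 12 colors each good group $G_k$ using $\Delta(G_k)+1$ colors from a distinct palette. By Lemma~\ref{lemma:degree_bound}, with high probability every node $u$ has at most $5\log n$ neighbors in any color group, so $\Delta(G_k)\le 5\log n$ for every $k$ simultaneously. Hence each palette has size $O(\log n)$, and summing over all (at most $\lceil \Delta/\log n\rceil$) good groups gives
\[
O\!\left(\frac{\Delta}{\log n}\right)\cdot O(\log n)\;=\;O(\Delta)
\]
colors used in Step 12, with high probability.

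Next I would handle the residual graph $\overline{G}$. Step 13 colors $\overline{G}$ locally at a single node using at most $\Delta(\overline{G})+1$ colors from a fresh palette. Since $\overline{G}$ is a subgraph of $G$, the trivial bound $\Delta(\overline{G})+1\le \Delta+1$ already gives $O(\Delta)$; alternatively, the proof of Lemma~\ref{lemma:residual_graph_size} shows that with high probability $\Delta(\overline{G})\le 10\log^2 n$, which is $O(\Delta)$ under the hypothesis $\Delta\ge \log^4 n$. Either way, the residual contributes $O(\Delta)$ additional colors. Summing the two contributions yields $O(\Delta)$ total colors, as claimed.

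The main (minor) obstacle is simply being careful that the high-probability degree bound from Lemma~\ref{lemma:degree_bound} applies uniformly across all groups and in the iteration in which the algorithm finally exits the loop; this is why I would invoke the union bound inside Lemma~\ref{lemma:degree_bound} and observe that, since the algorithm exits in a constant number of iterations in expectation, a union bound over iterations preserves the high-probability guarantee. No nontrivial calculation is needed beyond this bookkeeping.
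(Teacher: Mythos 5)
Your proposal is correct and follows essentially the same route as the paper's proof: bound the good color groups by $\lceil \Delta/\log n\rceil$ disjoint palettes of size $O(\log n)$ via Lemma~\ref{lemma:degree_bound}, and charge the residual graph an additional $O(\Delta)$ colors. Your extra remarks about union-bounding over groups and over loop iterations are sound bookkeeping that the paper leaves implicit, but they do not change the argument.
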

\begin{proof}
A palette of size $O(\log n)$ colors suffices for each good color group because we showed in Lemma
\ref{lemma:degree_bound} that with high probability the maximum degree in any
color group is $5 \log n$. 
Since there are a total of $\cel{\Delta / \log n}$ color groups and we use a distinct
palette of size $O(\log n)$ for each good color group, we use a total of
$O(\Delta)$ colors for the good color groups. 
The residual graph induced by not-good color groups
is colored in the last step and it requires an additional $O(\Delta)$ colors.
\end{proof}

\subsection{Coloring low-degree graphs}
\label{subsection:lowDegreeGraphs}

Now we describe an algorithm that we call \textsc{LowDegCol} that, given an
$n$-node graph $G$ with maximum degree $\Delta < \log^4 n$, computes a proper
$(\Delta + 1)$-coloring with high probability in $O(\llln)$ rounds in the Congested Clique model. 
The algorithm has two stages.
The first stage of the algorithm is based on the simple, natural, randomized coloring
algorithm first analyzed by Johannson \cite{Johansson} and more recently by
Barenboim et al.~\cite{BEPS12}. Each node $u$ starts with a color palette
$C_u = \{1, 2, \ldots, \Delta + 1\}$. In each iteration, each as-yet uncolored
node $u$ makes a tentative color choice $c(u) \in C_u$ by picking a color from
$C_u$ independently and uniformly at random. If no node in $u$'s neighborhood
picks color $c(u)$ then $u$ colors itself $c(u)$ and $c(u)$ is deleted from the
palettes of all neighbors of $u$. Otherwise, $u$ remains uncolored and
participates in the next iteration of the algorithm.
We call one such iteration \textsc{RandColStep}.
Barenboim et al.~\cite{BEPS12} show (as part of the proof of Theorem
5.1) that if we executed $O(\log \Delta)$ iterations of \textsc{RandColStep}, then
with high probability the nodes that remain uncolored induce connected components of size
$O(\mbox{poly}(\log n))$. 
Since we are evaluating a situation in which $\Delta < \log^4 n$, this translates to
using $O(\lln)$ iterations of \textsc{RandColStep} to reach a state with small connected components.
Now notice that this algorithm uses only the edges of $G$ -- the graph being colored -- for
communication.
By utilizing the entire bandwidth of the underlying clique communication network, it is possible 
to speed up this algorithm significantly and get it to complete in $O(\llln)$ rounds.
The trick to doing this is to rapidly gather, at each node $u$, all information needed by node $u$
to execute the algorithm locally.
We make this precise further below.

Once we execute $O(\llln)$ iterations of \textsc{RandColStep} and
all connected components induced by as-yet uncolored nodes become polylogarithmic in size, then 
Stage 2 of the algorithm begins.
In this stage, first each connected component is gathered at a node; we show how to accomplish this in
$O(\llln)$ rounds by appealing to the deterministic MST algorithm on a
Congested Clique due to Lotker et al.~\cite{lotker2006distributed}.
Then each connected component of uncolored nodes is 
shipped off to a distinct node and is 
locally (and independently) colored using $\Delta+1$
colors.


We start by developing Stage 1 first.
Suppose that for some constants $c_1, c_2, c_3$, $T < c_1 \log\log n$ iterations of \textsc{RandColStep} 
are needed before all connected components induced by uncolored nodes have size at most
$c_2 \cdot \log^{c_3} n$ with probability at least $1 - 1/n$.
Let $G_L$ denote a labeled version of graph $G$ in which each node $u$ is
labeled $(\texttt{ID}_u, \texttt{RS}_u)$,
where $\texttt{ID}_u$
is the $O(\log n)$-bit ID of node $u$ and 
$\texttt{RS}_u$ is a random bit string of length $T \cdot \lceil \log\Delta \rceil$.
For integer $k \ge 0$ and node $u \in V$, let $B(u, k)$ denote the set of all
nodes within $k$ hops of $u$ in $G$.
The following lemma shows that it is quite helpful if each node $u$ knew $G_L[B(u, T)]$,
the subgraph of the labeled graph $G_L$ induced by nodes in $B(u, T)$.

\begin{lemma}
\label{lemma:afterBalls}
Suppose that each node $u \in V$ knows $G_L[B(u, T)]$.
Then each node $u$ can locally compute a color $c(u) \in \{\bot\} 
\cup \{1, 2, \ldots, \Delta+1\}$ such that (i) nodes not colored $\bot$
induce a properly colored subgraph and (ii) nodes colored $\bot$ induce
connected components whose size is bounded above by $c_2 \log^{c_3} n$
with probability at least $1 - 1/n$.
\end{lemma}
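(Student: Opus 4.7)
The plan is to have each node $u$ use its random bit string $\texttt{RS}_u$ as its sequence of tentative color choices for a \emph{local simulation} of all $T$ iterations of \textsc{RandColStep}. Parse $\texttt{RS}_u$ into $T$ blocks of $\lceil \log \Delta \rceil$ bits; block $i$ is interpreted (via any fixed rule handling values that fall outside $\{1,\ldots,\Delta+1\}$) as the tentative color $u$ would propose in iteration $i$, conditioned on $u$ still being uncolored at the start of that iteration. Because the labels of all nodes in $B(u,T)$ are available to $u$ by hypothesis, and the simulation rule is a deterministic function of these labels, $u$ can run the entire $T$-iteration process in its head.

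The correctness of the simulation rests on a locality invariant, which I would prove by induction on $i \in \{0,1,\ldots,T\}$: the state of any node $v$ after $i$ iterations of \textsc{RandColStep} -- whether $v$ has become colored, with what color, and what its current palette is -- is a deterministic function of $G_L[B(v,i)]$ alone. The base case $i=0$ is immediate, since initial state is determined by $v$'s label plus its neighborhood in $G$. For the inductive step, iteration $i+1$ at $v$ depends only on $v$'s $(i+1)$-th random block, $v$'s state after iteration $i$, and the tentative choices (and post-iteration-$i$ states) of $v$'s neighbors; each of the latter is by induction a function of $G_L[B(w,i)]$ for some neighbor $w$, and $\bigcup_{w \in N(v)} B(w,i) \subseteq B(v,i+1)$, so the required information lies inside $G_L[B(v,i+1)]$.

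Setting $c(u)$ equal to the color assigned to $u$ by this simulation -- or $\bot$ if $u$ remains uncolored after $T$ iterations -- yields (i) automatically: any two adjacent nodes $u,v$ that both simulate agree on the outcome at each other, because the simulations implement the same deterministic function of the (shared) labeled subgraph, and \textsc{RandColStep} never assigns conflicting colors to adjacent nodes. Property (ii) follows directly from the hypothesis on $T$: $T$ was chosen precisely so that the set of still-uncolored nodes after $T$ iterations of \textsc{RandColStep} induces connected components of size at most $c_2\log^{c_3} n$ with probability at least $1 - 1/n$, by the result of Barenboim et al.~invoked earlier.

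The main obstacle is essentially notational: one must verify that the bits in $\texttt{RS}_u$ function as exactly the independent uniform tentative color choices used in the Johansson/Barenboim et al.~analysis, so that their shattering bound applies verbatim. Since the blocks within one $\texttt{RS}_u$ and across distinct nodes are disjoint slices of mutually independent uniform strings, this is immediate; minor bookkeeping (encoding $\Delta+1$ outcomes in $\lceil \log\Delta \rceil$ bits, and fixing a tie-breaking/rejection rule) is routine and does not affect the probabilistic guarantee.
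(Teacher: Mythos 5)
Your proposal is correct and follows essentially the same route as the paper: both arguments reduce to the observation that a node's state after $i$ iterations of \textsc{RandColStep} is a deterministic function of the labeled ball $G_L[B(\cdot,i)]$ (the paper unrolls this backwards from iteration $T$, you run the induction forwards), and both then invoke the choice of $T$ and the Barenboim et al.\ shattering bound for property (ii). Your added remarks on inter-node consistency and on interpreting $\texttt{RS}_u$ as the tentative color choices are details the paper leaves implicit, but they do not change the argument.
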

\begin{proof}
With respect to the execution of iterations of \textsc{RandColStep}, the \textit{state} of a node $u$ is its current 
color palette $C_u$ and its current color choice $c(u)$.
If $c(u) = \bot$, then $u$ has not colored itself; otherwise, $c(u)$ is a permanently assigned
color that node $u$ has given itself.
To figure out the state of node $u$ after $T$ iterations of \textsc{RandColStep}, it suffices to
know (i) the state of $u$ and its neighbors after $T-1$ iterations of \textsc{RandColStep} and
(ii) at most $\lceil \log \Delta\rceil$ random bits associated with each of these nodes so that their random color choices
in iteration $T$ can be determined.
Stated differently, it suffices to know (i) the subgraph $G_L[B(u, 1)]$ and (ii) the state of each node in $B(u, 1)$ after $T-1$ iterations
of \textsc{RandColStep}.
This in turn can be computed from (i) the subgraph $G_L[B(u, 2)]$ and (ii) the state of all nodes in $B(u, 2)$ after $T-2$ iterations
of \textsc{RandColStep}.
Continuing inductively, we conclude that in order to know the state of node $u$ after $T$ iterations
of \textsc{RandColStep}, it suffices to know $G_L[B(u, T)]$, where each node $v$ in $B(u, T)$ is labeled with 
an $(\texttt{ID}_v, \texttt{RS}_v)$-pair, where $\texttt{RS}_v$ is a random bit string of length $T \cdot \lceil \log\Delta \rceil$.
\end{proof}

\noindent
Now we focus on the problem of each node gathering 
$G_L[B(u, T)]$ and show that this problem can be solved 
in $O(\llln)$ rounds, given that $T = O(\lln)$ and $\Delta < \log^4 n$.

\begin{lemma}
\label{lemma:ballGrowing}
There is a Congested Clique algorithm running on an $n$-node input graph $G$
with maximum degree $\Delta < \log^4 n$ that terminates in $O(\llln)$ rounds
at the end of which, every node $u$ knows 
$G_L[B(u, T)]$.
\end{lemma}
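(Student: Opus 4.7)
The plan is a standard neighborhood-doubling scheme, with each doubling step implemented in $O(1)$ Congested Clique rounds via Lenzen's routing protocol \cite{lenzen2013routing}. Let $r_i$ denote the radius for which every node will hold its induced labeled subgraph after iteration $i$; we arrange $r_{i+1} \geq 2 r_i - 1$, so that after $O(\log T) = O(\llln)$ iterations we have $r_i \geq T$.

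\textbf{Bootstrap ($r_0 = 1$).} Each node broadcasts its label $(\texttt{ID}_u, \texttt{RS}_u)$ to its at most $\Delta < \log^4 n$ neighbors (each label has $O(\lln \cdot \log\Delta) = O(\lln^2)$ random bits plus $O(\log n)$ ID bits, so $O(1)$ words), and then broadcasts its now-labeled adjacency list to each neighbor. Every node sources and sinks at most $O(\Delta^2) = O(\log^8 n)$ words, so Lenzen's protocol delivers everything in $O(1)$ rounds. Afterwards each $u$ knows the labels of all nodes in $B(u, 1)$ and, through its neighbors' adjacency lists, all edges of $G_L[B(u, 1)]$.

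\textbf{Doubling step.} Assume that after iteration $i$, every $u$ holds $G_L[B(u, r_i)]$. In iteration $i+1$ each node $v$ sends its current subgraph to every node in $B(v, r_i)$; by symmetry of distance these are precisely the nodes $u$ with $v \in B(u, r_i)$. Each $u$ then unions the received subgraphs with its own. To see that the result contains $G_L[B(u, 2 r_i - 1)]$, fix any edge $\{x, y\}$ with $x, y \in B(u, 2 r_i - 1)$. If $d(u, x) \leq r_i$, set $v = x$; otherwise let $v$ lie on a shortest $u$-to-$x$ path at distance exactly $r_i$ from $u$, so $d(v, x) = d(u, x) - r_i \leq r_i - 1$ and $d(v, y) \leq d(v, x) + 1 \leq r_i$. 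In either case $v \in B(u, r_i)$ and $x, y \in B(v, r_i)$, so the edge lies in the union. Thus $r_{i+1} \geq 2 r_i - 1$, giving $r_i \geq T$ after $O(\llln)$ iterations.

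\textbf{Cost per iteration (the main obstacle).} The crux is bounding per-node traffic so Lenzen's protocol applies. Throughout the algorithm $r_i \leq 2T = O(\lln)$, hence $|B(v, r_i)| \leq \Delta^{r_i} \leq (\log^4 n)^{O(\lln)} = 2^{O(\lln^2)} = n^{o(1)}$, where the last equality uses $\lln^2 = o(\log n)$. Each subgraph $G_L[B(v, r_i)]$ thus admits a description in $O(|B(v, r_i)|^2) = n^{o(1)}$ words; since $v$ transmits this to $|B(v, r_i)| = n^{o(1)}$ recipients and each $u$ appears as a recipient in at most $|B(u, r_i)| = n^{o(1)}$ transmissions, every node sources and sinks at most $n^{o(1)} \ll n$ words of $O(\log n)$ bits per iteration. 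Lenzen's routing therefore executes each iteration in $O(1)$ rounds, and summing over the $O(\llln)$ iterations (plus the $O(1)$-round bootstrap) yields the claimed $O(\llln)$-round algorithm.
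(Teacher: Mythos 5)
Your proposal is correct and follows essentially the same route as the paper: bootstrap to radius $1$, then repeatedly double the known radius, implementing each doubling in $O(1)$ rounds via Lenzen's routing with the volume bound $\Delta^{O(T)} = n^{o(1)} = o(n)$ per node. Your doubling step is in fact slightly more careful than the paper's --- you prove the union of received balls contains $G_L[B(u,2r_i-1)]$ rather than asserting $G_L[B(u,2t)]$, thereby handling edges whose two endpoints are both at distance exactly $2t$ from $u$ --- but this refinement changes nothing asymptotically.
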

\begin{proof}
The algorithm starts with each node $u$ broadcasting its degree in $G$ to all nodes in $V$.
This enables every node to locally compute $\Delta$ and also a random bit string $\texttt{RS}_u$ of
length $T \cdot \lceil \log \Delta \rceil$.
After computing $\texttt{RS}_u$, each node $u$ sends to each neighbor in $G$ the pair $(\texttt{ID}_u, \texttt{RS}_u)$.
Now each node $u$ is in possession of the collection of $(\texttt{ID}_v, \texttt{RS}_v)$-pairs for all
neighbors $v$.
Each node $u$ now has a goal of sending this collection to every neighbor.
Note that the total volume of information that $u$ wishes to send out is bounded above by $\Delta^2$ (measured
in $O(\log n)$-sized words).
Also, each node $u$ is the destination for at most $\Delta^2$ words.
Since $\Delta^2 = o(n)$, 
using Lenzen's routing protocol \cite{lenzen2013routing}, each node can successfully send its entire collection of
$(\texttt{ID}, \texttt{RS})$-pairs to all neighbors in constant rounds.
Based on this received information, each node $u$ can construct $G_L[B(u, 1)]$.

Proceeding inductively, suppose that each node $u$ has gathered $G_L[B(u, t)]$, where $1 \le t < T$.
We now show that in an additional constant rounds, $u$ can gather $G_L[B(u, 2t)]$.
First note that $|B(u, t)| \le \Delta^{t+1}$ for any node $u \in V$.
Therefore, $G_L[B(u, t)]$ can be completely described using $O(\Delta^{t+2})$ words
of information.
In order to compute $G_L[B(u, 2t)]$, each node $u$ sends $G_L[B(u, t)]$ to each node
in $B(u, t)$.
A node $u$, on receiving $G_L[B(v, t)]$ for all nodes $v$ in $B(u, t)$, can perform a local
computation to determine $G_L[B(u, 2t)]$. 
Note that the total volume of information that $u$ needs to send out during this communication is $O(\Delta^{2t+3})$ words.
By a symmetric reasoning, each node $u$ is the destination for at most $O(\Delta^{2t+3})$ words of information.
Since $\Delta < \log^4 n$ and $t < T = O(\lln)$, $\Delta^{2t+3} = o(n)$ and therefore using Lenzen's routing
protocol, each node $u$ can send $G_L[B(u, t)]$ to each node in $B(u, t)$ in constant rounds.

Since the goal of the algorithm is for each node $u$ to learn $G_L[B(u, T)]$, where $T = O(\lln)$,
it takes $O(\llln)$ iterations of the above described inductive procedure to reach
this goal. 
The result follows from the fact that each iteration involves a constant number of communication rounds.
\end{proof}

An immediate consequence of Lemmas \ref{lemma:afterBalls} and
\ref{lemma:ballGrowing} is that
there is a Congested Clique algorithm running on an $n$-node input graph $G$
with maximum degree $\Delta < \log^4 n$ that terminates in $O(\llln)$ rounds
at the end of which, every node $u$ has assigned itself a color $c(u) \in \{\bot\} \cup \{1, 2, \ldots, \Delta+1\}$
such that (i) nodes not colored $\bot$
induce a properly colored subgraph and (ii) nodes colored $\bot$ induce
connected components whose size is bounded above by $O(\mbox{poly}(\log n))$
with probability at least $1 - 1/n$.
This brings us to Stage 2 of our algorithm.
The first task in this stage is to distribute information about uncolored nodes (i.e., nodes $u$ with $c(u) = \bot$)
such that each connected component in the subgraph induced by uncolored nodes ends up at a 
node in the network.
To perform this task in $O(\llln)$ rounds, we construct a complete, edge-weighted graph in which an
edge $\{u, v\}$ has weight $w(u, v) = 1$ if $\{u, v\} \in E$ and $c(u) = c(v) = \bot$ and 
has weight $n$ otherwise.
Thus, edges in the subgraph of $G$ induced by uncolored nodes have weight 1 and edges connecting all other pairs
of nodes have weight $n$.
This complete, edge-weighted graph serves as an input to the MST algorithm of Lotker et al.
Note that this input is distributed across the network with each node having knowledge of the weights 
of all $n-1$ edges incident on it.
Also note that this knowledge can be acquired by all nodes after just one round of communication.
As mentioned earlier, the Lotker et al.~MST algorithm runs in $O(\lln)$ rounds.
Since we are not interested in computing an MST, but only in identifying connected
components, we do not have to run the Lotker et al.~algorithm to completion.

The Lotker et al.~algorithm runs in phases, taking constant number of communication rounds
per phase. At the end of phase $k \ge 0$, the algorithm has computed a partition 
$\mathcal{F}^k = \{F_1^k, F_2^k, \ldots, F_{m_k}^k\}$ of the nodes of $G$ into
\textit{clusters}.
Furthermore, for each cluster $F \in \mathcal{F}^k$, the algorithm has computed a spanning
tree $T(F)$.
The correctness of the algorithm is ensured by the fact that each tree $T(F)$ is a subgraph of the MST.
It is worth noting that every node in the network knows the partition $\mathcal{F}^k$ and the collection
$\{T(F) \mid F \in \mathcal{F}^k\}$ of trees.
Suppose that the minimum size cluster in $\mathcal{F}^k$ has size $N$.
The $O(\lln)$ running time of the Lotker et al.~algorithm arises from the fact that in each phase the algorithm
merges clusters and at the end of Phase $k+1$ the smallest cluster in $\mathcal{F}^{k+1}$ has size at
least $N^2$.
Thus the size of the smallest cluster ``squares'' in each phase and therefore it takes $O(\lln)$ rounds
to get to the stage where the smallest cluster has size $n$, at which point there is only one cluster
$F$ and $T(F)$ is the MST.

We are interested in executing $T$ phases of the Lotker et al.~algorithm so that the size of the smallest
cluster in $\mathcal{F}^T$ is at least the size of the largest connected component induced by uncolored nodes.
Since the size of the largest connected component in the graph induced by uncolored nodes is $O(\mbox{poly}(\log n))$,
it takes only $T = O(\llln)$ phases to reach such a stage.
Let $\mathcal{F}^T = \{F_1^T, F_2^T, \ldots, F_{m}^T\}$ be the partition of the nodes of $G$ into clusters
at the end of $T$ phases of the Lotker et al.~algorithm.

\begin{lemma}
Let $C$ be a connected component in the subgraph induced by uncolored nodes.
Then $C \subseteq F^T_i$ for some $i$. 
\end{lemma}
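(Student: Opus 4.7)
The plan is to show that after $T$ phases of the Lotker et al.~algorithm, all of $C$ has been absorbed into a single cluster $F^T_i$. The argument rests on two features of our weighted input: edges internal to $C$ have weight $1$ while every other edge incident on a vertex of $C$ has weight $n$, and the Lotker et al.~algorithm is Boruvka-style, in each phase merging clusters along minimum outgoing edges.

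First I would observe that as long as a cluster $F^k_i$ intersects $C$ without containing all of $C$, the connectedness of $C$ in $G$ guarantees a weight-$1$ outgoing edge from $F^k_i$ to another $C$-intersecting cluster. Consequently, the minimum outgoing edge selected by such a cluster must have weight $1$ and must lead to a cluster that also intersects $C$. This means the restriction of the Lotker et al.~algorithm to $C$-intersecting clusters behaves exactly as if the algorithm were being executed on the subgraph $G_C$ induced by $C$ alone: weight-$n$ edges incident on $C$ never participate in any merge involving a $C$-intersecting cluster while $C$ remains split across multiple clusters. As a result, the size-squaring guarantee of Lotker et al.~transfers cleanly to the sub-execution on $C$, giving a minimum $C$-intersecting cluster size of at least $2^{2^{k-1}}$ after phase $k$.

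Since $|C| = O(\poly(\log n))$, choosing $T = O(\llln)$ makes $2^{2^{T-1}} \geq |C|$, so the minimum-size $C$-intersecting cluster at the end of phase $T$ must contain all of $C$; equivalently, $C \subseteq F^T_i$ for some $i$. The main obstacle is rigorously formalizing the reduction in the first step, i.e.~verifying that the merge decisions of $C$-intersecting clusters across every phase depend only on the structure of $G_C$ and not on concurrent activity among non-$C$-intersecting clusters, so that the size-squaring property of Lotker et al.~is legitimately inherited by the sub-execution on $C$. A minor subtlety to dispatch is that once $C$ is entirely absorbed into some cluster, that cluster may afterwards merge with non-$C$-intersecting clusters via weight-$n$ edges, but such merges only grow the containing cluster and therefore preserve the conclusion $C \subseteq F^T_i$.
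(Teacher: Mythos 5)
Your route is genuinely different from the paper's: you try to track the growth of the $C$-intersecting clusters directly, importing Lotker et al.'s size-squaring guarantee into a purported independent execution on $G_C$, whereas the paper argues by contradiction from the final state (if $C$ were split, some final cluster $F^T_i$ meets $C$ properly while having $|F^T_i| \ge |C|$, which forces a weight-$n$ edge into $T(F^T_i)$ and contradicts the greedy edge selection, since a cluster that meets $C$ without containing it always has a weight-$1$ outgoing edge available). The problem is that your reduction --- ``the restriction of the algorithm to $C$-intersecting clusters behaves exactly as if it were executed on $G_C$ alone'' --- is false, not merely unformalized. You only exclude weight-$n$ merges \emph{initiated by} $C$-intersecting clusters. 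But a cluster disjoint from $C$ (for instance one consisting entirely of already-colored nodes) has only weight-$n$ outgoing edges; its minimum outgoing edge may well point into a $C$-intersecting cluster, and the algorithm legitimately merges along it, in any phase, long before $C$ is absorbed. (You acknowledge such merges only for the period \emph{after} $C$ is absorbed.) So $C$-intersecting clusters get contaminated with vertices outside $C$ throughout the execution, and the sub-execution is not an execution on $G_C$.

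This breaks your final step. The size-squaring guarantee bounds $|F^k_i|$, not $|F^k_i \cap C|$, so ``the minimum-size $C$-intersecting cluster at phase $T$ has size at least $|C|$'' does not imply that it contains $C$: a cluster of size at least $|C|$ may meet $C$ in a single vertex. Indeed \emph{every} cluster in $\mathcal{F}^T$ has size at least $|C|$ --- that is precisely how $T$ is chosen --- so your conclusion has collapsed back into the hypothesis rather than following from it. Your opening observation is correct and worth keeping: a cluster meeting $C$ but not containing it has a weight-$1$ minimum outgoing edge leading to another $C$-intersecting cluster. To finish, you should either (a) switch to the paper's contradiction, using $|F^T_i| \ge |C|$ only to force a vertex of $V \setminus C$, and hence a weight-$n$ edge, into the tree $T(F^T_i)$ of a cluster that properly splits $C$, contradicting the fact that every tree edge is, at the moment it is added, a minimum-weight outgoing edge of one of the two clusters it joins; or (b) track the \emph{number} of $C$-intersecting clusters rather than their sizes: in each merge step every such cluster merges with another such cluster along a weight-$1$ edge, so this number at least halves per step --- though you would then still need to relate the number of merge steps performed within $T$ phases to $\log_2 |C|$.
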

\begin{proof}
To obtain a contradiction suppose that $C \cap F^T_i \not= \emptyset$ and $C \cap F^T_j \not= \emptyset$ for some $1 \le i \not= j \le m$.
Then there is an edge of weight 1 connecting a node in $F^T_i$ and a node in $F^T_j$.
Since $|F^T_i| \ge |C|$, the tree $T(F^T_i)$ contains an edge of weight $n$.
Thus at some point in the Lotker et al.~algorithm, it chose to merge clusters
using an edge of weight $n$ when it could have used an edge of weight 1.
This contradicts the behavior of the Lotker et al.~algorithm.
\end{proof}

The rest of Stage 2 is straightforward.
One node, say $u^*$, considers each $F \in \mathcal{F}^T$ and deletes all edges of weight $n$ from $T(F)$.
This will result in $F$ splitting up into smaller clusters; these clusters are the connected components
of the subgraph of $G$ induced by uncolored nodes.
Note that at this point we think of a connected component as simply a subset of nodes.
Node $u^*$ then ships off each connected component to a distinct node, possibly the node
with the smallest ID in that component.
This takes constant number of rounds via the use of Lenzen's routing protocol.
Suppose that a node $u$ has received a connected component $C$.
Node $u$ then contacts the nodes in $C$ to find out (i) all edges connecting pairs of nodes
in $C$, and (ii) the current palettes $C_v$ for each node $v \in C$.
Since $|C|$ is polylogarithmic in size and $\Delta < \log^4 n$, it is easy to see that all of this
information requires polylogarithmic number of bits to represent and therefore can be communicated
to $u$ in constant number of rounds via Lenzen's routing protocol.
Node $u$ then colors each node $v \in C$ using a color from its palette $C_v$ such that 
the graph induced by $C$ is properly colored.
This completes Stage 2 and we have a $(\Delta+1)$-coloring of $G$.

\begin{lemma}
Given an $n$-node graph $G$ with maximum degree $\Delta \leq \log^4 n$,
Algorithm \textsc{LowDegCol} computes a proper $(\Delta+1)$-coloring in
$O(\llln)$ rounds in the Congested Clique model.
\label{lemma:lowDegCol}
\end{lemma}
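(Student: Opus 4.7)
The plan is to chain together the two stages already developed and account for the total round count. First I would invoke Lemma \ref{lemma:ballGrowing} to argue that Stage 1 terminates in $O(\llln)$ rounds with every node $u$ holding $G_L[B(u,T)]$, where $T = O(\lln)$ is the number of \textsc{RandColStep} iterations that Barenboim et al.~\cite{BEPS12} guarantee suffices for the uncolored subgraph to break into connected components of size $O(\poly(\log n))$ (valid here because $\Delta < \log^4 n$). Lemma \ref{lemma:afterBalls} then lets each node simulate those $T$ iterations locally, so Stage 1 ends with each node holding a color $c(u) \in \{\bot\} \cup \{1, \ldots, \Delta+1\}$ satisfying the two desired properties with probability at least $1 - 1/n$.

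For Stage 2 I would first have each node broadcast $c(u)$ (one round), after which every node knows the weighted complete graph fed to the Lotker et al.~MST subroutine. I would then run that MST algorithm for only $T' = O(\llln)$ phases --- enough for the smallest cluster size to surpass the largest uncolored component by the cluster-size squaring property, so that by the lemma immediately preceding this one, every uncolored connected component lies inside a single cluster $F^{T'}_i$. A designated node strips the weight-$n$ edges from each spanning tree $T(F^{T'}_i)$ to expose the true uncolored components and routes each component, together with the palettes $C_v$ of its members, to a distinct node via Lenzen's protocol in $O(1)$ rounds; this fits because each component is polylogarithmic in size and $\Delta < \log^4 n$ keeps the total volume per source and per destination well below $n$.

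Finally I would verify that local coloring at each recipient succeeds within the $(\Delta+1)$-color budget: every uncolored node $v$ loses at most one palette color per colored neighbor, so $|C_v| \ge \Delta + 1 - (\deg_G(v) - \deg_{\overline G}(v)) \ge \deg_{\overline G}(v) + 1$, where $\overline G$ is the subgraph induced by uncolored nodes. A greedy list-coloring in any order of the received component therefore produces a proper coloring consistent with the already-assigned colors outside. Adding $O(\llln)$ for Stage 1 to $O(\llln) + O(1)$ for Stage 2 yields the claimed bound. The main conceptual hurdle --- compressing $\Theta(\lln)$ randomized-coloring iterations into $O(\llln)$ actual communication rounds --- has already been dispatched by the ball-growing argument of Lemma \ref{lemma:ballGrowing}, so the rest is bookkeeping, with the only mild subtlety being the list-coloring feasibility check at the end.
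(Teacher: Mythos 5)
Your proposal is correct and follows essentially the same route as the paper, which proves this lemma implicitly by assembling Lemmas \ref{lemma:afterBalls} and \ref{lemma:ballGrowing} for Stage 1 with the truncated Lotker et al.~clustering, Lenzen routing, and local coloring for Stage 2. Your explicit check that $|C_v| \ge \deg_{\overline{G}}(v) + 1$ guarantees the greedy list-coloring succeeds is a detail the paper leaves implicit, but it is the same argument.
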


\noindent Combining Lemmas \ref{lemma:constantCommunication} and
\ref{lemma:deltaColors} along with Lemma \ref{lemma:lowDegCol} gives the
following theorem.

\begin{theorem}
Given an $n$-vertex input graph $G = (V, E)$ with maximum degree
$\Delta \geq \log^4 n$, Algorithm \textsc{HighDegCol} computes an
$O(\Delta)$-coloring in $O(1)$ rounds (in expectation) in the Congested Clique
model. For arbitrary $\Delta$, an $O(\Delta)$-coloring can be computed in
$O(\llln)$ rounds in expectation in the Congested Clique model.
\label{theorem:CCColoring}
\end{theorem}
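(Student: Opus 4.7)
The theorem is essentially a corollary of the lemmas already established, so the plan is to assemble them cleanly rather than prove anything substantially new.

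The plan is to split into the two cases $\Delta \geq \log^4 n$ and $\Delta < \log^4 n$ handled by \textsc{HighDegCol} and \textsc{LowDegCol} respectively. For the first statement of the theorem, I would invoke Lemma \ref{lemma:constantCommunication} to conclude that \textsc{HighDegCol} terminates in $O(1)$ rounds in expectation once $\Delta \geq \log^4 n$ (this is exactly where the $\log^4 n$ threshold is used, since Lemma \ref{lemma:residual_graph_size} shows the residual graph has $O(n)$ edges only in that regime, which in turn is needed so that Step 13 can ship $\overline{G}$ to a single node in $O(1)$ rounds). Combined with Lemma \ref{lemma:deltaColors}, which bounds the total number of colors used by $O(\Delta)$, this establishes the first sentence of the theorem.

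For the second statement, I would observe that Step 2 of \textsc{HighDegCol} dispatches any input with $\Delta \leq \log^4 n$ to \textsc{LowDegCol}. Invoking Lemma \ref{lemma:lowDegCol} gives a proper $(\Delta+1)$-coloring in $O(\llln)$ rounds (in fact with high probability, which trivially implies in expectation) for that regime, while the $\Delta \geq \log^4 n$ regime is handled in $O(1) = O(\llln)$ rounds by the first part. Taking the maximum of the two round complexities and noting that $(\Delta+1) = O(\Delta)$ in one case and $O(\Delta)$ in the other, the combined algorithm produces an $O(\Delta)$-coloring in $O(\llln)$ rounds in expectation for arbitrary $\Delta$.

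There is no real obstacle here; the only minor bookkeeping points to watch are (i) ensuring that the high-probability guarantees in Lemmas \ref{lemma:no_large_groups}, \ref{lemma:degree_bound}, and \ref{lemma:residual_graph_size} translate correctly into an expected-$O(1)$ overall bound for \textsc{HighDegCol} (handled by absorbing the vanishingly rare failure event, on which we can simply restart, into the expectation), and (ii) noting that the palettes used for the good color groups, for the residual graph, and (in the low-degree case) the palette $\{1,\ldots,\Delta+1\}$ are all within an $O(\Delta)$ total-color budget. Given those remarks, the theorem follows by concatenating the previously established lemmas, and I would write the proof simply as a two- or three-sentence combination statement.
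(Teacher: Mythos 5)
Your proposal is correct and matches the paper's own treatment exactly: the paper derives the theorem in a single sentence by combining Lemmas \ref{lemma:constantCommunication} and \ref{lemma:deltaColors} (for the $\Delta \geq \log^4 n$ case) with Lemma \ref{lemma:lowDegCol} (for the low-degree case dispatched in Step 2), which is precisely your case split. Your additional bookkeeping remarks (restarting on the rare failure event, summing the palette budgets) are sound and, if anything, slightly more careful than what the paper writes.
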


\section{MapReduce Algorithms from Congested Clique Algorithms}
\label{sect:MRfromCC}

In this section, we prove a \textit{simulation} theorem establishing that
Congested Clique algorithms (with fairly weak restrictions) 
can be efficiently implemented in
the MapReduce model. The simulation ensures that a Congested Clique algorithm
running in $T$ rounds can be implemented in $O(T)$ rounds (more precisely,
$3 \cdot T + O(1)$ rounds) in the MapReduce model, if certain communication and
``memory'' conditions are met. The technical details of this simulation are
conceptually straightforward, but the details are a bit intricate.

We will now precisely define restrictions that we need to place on Congested Clique 
algorithms in order for the simulation theorem to go through.
We assume that each node in the Congested Clique possesses a word-addressable memory 
whose words are indexed by the natural numbers. For an algorithm $\mathcal{A}_{CC}$ running in the
Congested Clique, let $I_u^{(j)} \subset \mathbb{N}$ be the set of memory
addresses \textit{used} by node $u$ during the local computation in round $j$
(not including the sending and receipt of messages).

After local computation in each round, each node in the Congested Clique may
send (or not send) a distinct message of size $O(\log n)$ to each other node in
the network. In defining notation, we
make a special distinction for the case where a node $u$ sends in the
\textit{same} message to every other node $v$ in a particular round; i.e., node
$u$ sends a \textit{broadcast} message. The reason for this distinction is that
broadcasts can be handled more efficiently on the receiving end in the MapReduce
framework than can distinct messages sent by $u$. 
Let $m_{u,v}^{(j)}$ denote a message sent by node $u$ to node $v$ in round $j$
and let $D_u^{(j)} \subseteq V$ be the set of destinations of messages sent
by node $u$ in round $j$.
Let $M_u^{(j)} = \{m_{u,v}^{(j)} \colon v \in D_u^{(j)} \subset V\}$ be the set
of messages \textit{sent} by node $u$ in round $j$ of algorithm
$\mathcal{A}_{CC}$, except let
$M_u^{(j)} = \emptyset$ if $u$ has chosen to broadcast a message $b_{u}^{(j)}$
in round $j$. Similarly, let
$\overline{M}_u^{(j)} = \{m_{v,u}^{(j)} \colon u \in D_v^{(j)} \text{ and } v \text{ is not broadcasting in round $j$}\}$
be the set of messages \textit{received} by node $u$ in round $j$, except that
we exclude messages $b_v^{(j)}$ from nodes $v$ that have chosen to broadcast in
round $j$. 
We say that $\mathcal{A}_{CC}$, running on an $n$-node
Congested Clique, is $(K, N)$-\textit{lightweight} if
\begin{itemize}
\item[(i)] for each round $j$ (in the Congested Clique),
$\sum_{u \in V} (|\overline{M}_u^{(j)}| + |I_u^{(j)}|) = O(K)$;
\item[(ii)] there exists a constant $C$ such that for each round $j$ and for
each node $u$, $I_u^{(j)} \subseteq \{1, 2, \ldots, \cel{C \cdot N}\}$; and
\item[(iii)] each node $u$ performs only polynomial-time local computation in
each round.
\end{itemize}
In plain language: no node uses more than $O(N)$ memory for local computation
during a round;
the total amount of memory that all nodes use and the total volume of messages
nodes receive in any round is bounded by $O(K)$.
Regarding condition (iii), traditional models of distributed computation such as
the $\mathcal{CONGEST}$ and $\mathcal{LOCAL}$ models allow nodes to perform
arbitrary local computation (e.g., taking exponential time), but since the
MapReduce model requires mappers and reducers to run in polynomial time, we need
this extra restriction.

\begin{theorem}
Let $\epsilon$, $c$ satisfy $0 \leq \epsilon \leq c$, and let $G = (V, E)$ be a
graph on $n$ vertices having $O(n^{1+c})$ edges. If $\mathcal{A}_{CC}$ is a
$(n^{1+c}, n^{1+\epsilon})$-\textit{lightweight} Congested Clique-model
algorithm running on input $G$ in $T$ rounds, then $\mathcal{A}_{CC}$ can be
implemented in the MapReduce model with $n_r = n^{c-\epsilon}$ machines and
$m_r = \Theta(n^{1+\epsilon})$ (words of) memory per machine such that the
implementation runs in $O(T)$ Map-Shuffle-Reduce rounds on $G$.
\label{theorem:MRfromCC}
\end{theorem}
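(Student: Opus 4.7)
The plan is to simulate $\mathcal{A}_{CC}$ round-by-round on $n_r = n^{c-\epsilon}$ MapReduce machines, with each Congested Clique round implemented by at most three Map-Shuffle-Reduce rounds. An initial $O(1)$-round phase distributes the input graph $G$ uniformly across machines using a Lenzen-style sorting/routing; since $|E| = O(n^{1+c})$ and the total system memory is $n_r \cdot m_r = \Theta(n^{1+c})$, the input fits, and each node $u$'s adjacency list can be placed with its initial machine.

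Throughout the simulation I would maintain the invariant that the state of CC node $u$ at the start of round $j$ is represented as a list of $(key,value)$-pairs with keys of the form $(\mbox{state}, u, a)$ for $a \in I_u^{(j-1)}$, together with its incoming messages $\overline{M}_u^{(j)}$ keyed by $(\mbox{msg}, u)$. By condition (ii), the addresses $a$ lie in a range of size $O(N) = O(n^{1+\epsilon})$, so all keys fit in $O(\log n)$ bits; by condition (i), the aggregate representation occupies $O(K) = O(n^{1+c})$ words and therefore fits into the system.

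One CC round is simulated in three MR rounds. In the first, the shuffle groups all state/message pairs by their owning node $u$ so that each reducer receives the complete state plus incoming messages for every CC node it hosts, performs the round's local computation (condition (iii) guarantees this is within the polynomial-time reducer budget), and emits updated state pairs keyed by $(\mbox{state}, u, a)$, outgoing point-to-point messages keyed by their destination $v$, and any broadcast with a distinguished key. In the second MR round, the shuffle routes messages by destination, producing the $\overline{M}_v^{(j+1)}$ sets at the appropriate machines, while broadcasts are handled as discussed below. In the third MR round, a Lenzen-style balancing sort reassigns nodes to machines so that the next round's per-machine load --- state plus incoming messages --- stays within $O(n^{1+\epsilon})$. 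This balancing is feasible precisely because condition (i) bounds the total load by $O(n^{1+c})$ while condition (ii) bounds any single node's load by $O(n^{1+\epsilon})$, so a first-fit packing into $n^{c-\epsilon}$ bins of capacity $\Theta(n^{1+\epsilon})$ always succeeds.

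The main obstacle is handling broadcasts, because naively delivering every broadcast to all $n$ CC nodes would produce $\Omega(n^2)$ traffic and blow the memory budget whenever $c < 1$. The resolution relies on the observation that a broadcast needs to reach each \emph{machine} only once, since co-located CC nodes can share the received value. Per CC round there are at most $n$ broadcasts of $O(\log n)$ bits each, so I would first route all broadcasts to a designated aggregator machine (total $O(n)$ words, fitting in $O(n^{1+\epsilon})$), and then disperse them to the $n^{c-\epsilon}$ machines via a constant-depth fan-out tree, each level of which multiplies the replication by the per-reducer output capacity. The total replication volume is $O(n \cdot n^{c-\epsilon}) = O(n^{1+c-\epsilon})$, comfortably within the $O(n^{1+c})$ total memory, and the tree has $O(1)$ depth. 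Adding the $O(1)$-round setup and broadcast-dispersion overhead to the $3T$ simulation rounds gives the claimed $3T + O(1) = O(T)$-round MapReduce implementation.
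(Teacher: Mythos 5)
Your overall architecture matches the paper's: an $O(1)$-round initialization that packs nodes and their incident edges onto machines via a first-fit partition, three MapReduce rounds per Congested Clique round, a per-round rebalancing of nodes to machines, and replication of each broadcast once per \emph{machine} rather than once per recipient node (your aggregator-plus-fan-out-tree is more elaborate than necessary, since reducers have no output-size restriction and can emit $n_r$ copies directly, but it is not wrong).

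There is, however, a genuine gap in the ordering of your three rounds: you perform the local computation and emit the outgoing messages, then in your second round shuffle those messages grouped by destination node $v$, and only in your third round compute the balanced reassignment. At the moment the second shuffle completes, the messages $\overline{M}_v^{(i)}$ must physically land on whichever machine currently holds key $v$ --- i.e., on a machine determined by the \emph{old} partition (or by some oblivious key-to-reducer map). The lightweightness conditions bound only the total message volume by $O(n^{1+c})$ and each single node's incoming volume by $O(n)$; they say nothing about how that volume distributes over a stale partition. A machine hosting many low-degree nodes can therefore receive far more than $\Theta(n^{1+\epsilon})$ words in this intermediate step (e.g., $n^{1-c+\epsilon}$ co-located nodes each receiving $n$ messages), violating the memory-per-machine constraint before your rebalancing ever runs. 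The paper's proof is structured precisely to avoid this: in the first of its three rounds each reducer simulates the computation but emits only the \emph{sizes} $s_u = |\mathcal{H}_u^{(i)}|$ and load-balanced partial message counts $c_{r,v}$ (which total only $O(n \cdot n_r)$ words); the second round aggregates these into $|\overline{M}_v^{(i)}|$ and disseminates them; and only in the third round, once every reducer knows the new partition function $F_i$, are the actual state tuples and message tuples emitted, keyed directly by $F_i(u)$ and $F_i(v)$. You need to swap your delivery and rebalancing steps in this count-first fashion (which is exactly what forces the $3T$ rather than $2T$ round count) for the memory argument to go through.
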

\begin{proof}
The simulation that will prove the above theorem contains
two stages: the \textit{Initialization} stage and the \textit{Simulation}
stage. In the Initialization stage, the input to the MapReduce
system is transformed from the assumed format (an unordered list of edges and
vertices of $G$) into a format in which each piece of information, be it an
edge, node, or something else, that is associated with a node of $G$ is gathered at a single
machine.
After this gathering of associated information has been completed, the MapReduce system can emulate
the execution of the Congested Clique algorithm.


\noindent\textbf{Initialization} stage. Input (in this case, the graph $G$) in the
MapReduce model is assumed to be presented as an unordered sequence of tuples of
the form $(\varnothing, u)$, where $u$ is a vertex of $G$, or
$(\varnothing, (u, v))$, where $(u, v)$ is an edge of $G$.
The goal of the Initialization stage is to partition the input $G$ among
the $n_r$ reducers such that each reducer $r$ receives a subset 
$P_r \subseteq V$ and all edges $E_r$ incident on nodes in $P_r$ such that
$|P_r| + |E_r|$ is bounded above by $O(n^{1+\epsilon})$.
This stage can be seen as consisting of two tasks: (i) every reducer $r$ learns
the degree $\gdeg{G}{u}$ of every node $u$ in $G$ and (ii) every reducer
computes a partition (the same one) given by the partition function 
$F_0 : V \longrightarrow \{1, 2, \ldots, n_r\}$, defined by
$$F_0(x) = \begin{cases} 1, & \mbox{if } x = 1\\ F_0(x-1), 
& \mbox{if } \sum_{v \in L(x)} \gdeg{G}{v} \leq n^{1+\epsilon},\\
F_0(x-1)+1, & \mbox{otherwise}
\end{cases}
$$
Here $L(x) = \{j < x: F_0(j) = F_0(x-1)\}$.
All nodes in the same group in the partition are mapped to the same value by $F_0$ and will be assigned to a 
single reducer.
Since the degree of each node is bounded above by $n$, it is easy to see that 
for any $r \in \{1, 2, \ldots, n_r\}$, $F_0^{-1}(r)$ is a subset of nodes
of $G$ such that $|F_0^{-1}(r)| + \sum_{u\in F_0^{-1}(r)} \gdeg{G}{u}$
is $O(n^{1+\epsilon})$.
Each of the two tasks mentioned above can be implemented in a (small) constant number of
MapReduce rounds as follows.

\begin{itemize}
\item\textbf{Map 1:} In Map phase 1, for each tuple $(\varnothing, u)$, a
mapper chooses a random reducer $r$ and emits the tuple $(r, u)$. For each tuple
$(\varnothing, (u, v))$, a mapper again chooses a random reducer $r$ and emits
the tuple $(r, (u, v))$. Because the reduce keys are chosen at random, with high
probability (actually, exponentially high probability) each reducer in Reduce
phase 1 will receive $O(n^{1+\epsilon})$ tuples.

\item\textbf{Reduce 1:} In Reduce phase 1, a reducer $r$ receives tuples whose
values consist of some collection $P_r \subseteq V$ of vertices and some
collection $E_r \subseteq E$ of edges of $G$. For each value consisting
of a vertex $u$, a reducer $r$ re-emits the tuple $(r, u)$, and for each value
consisting of an edge $(u, v)$, reducer $r$ re-emits the tuple$(r, (u, v))$. In
addition, a reducer $r$ emits, for each vertex $u$ such that reducer $r$
received an edge $(u, v)$ or $(v, u)$, a tuple $(r, u, d_{r,u})$, where
$d_{r,u}$ is total number of edges received by reducer $r$ containing $u$.
(In other words, $d_{r,u}$ is the partial degree of $u$ seen by reducer $r$.)

\item\textbf{Map 2:} In Map phase 2, mappers again load-balance tuples
containing vertices or edges as values across the reducers uniformly at random
(an action which is successful w.h.p.), as in Map phase 1. In addition, when a
mapper processes a tuple of the form $(r, u, d_{r,u})$, it emits the tuple
$((u \bmod n_r), u, d_{r,u})$. Here $u \bmod n_r$ refers to the reduction of the
identifier of node $u$ modulo the number of reducers, $n_r$. There are at most
$n \cdot n_r = O(n^{1+c-\epsilon})$ such tuples, and thus (i) each reducer is
the destination of $O(n)$ such tuples
(of the form $((u \bmod n_r), u, d_{r,u})$); and (ii) all tuples containing a
partial degree sum of node $u$ among their values are given the same key
and thus sent to the same reducer during the second MapReduce round.

\item\textbf{Reduce 2:} In Reduce phase 2, a reducer $r$ again re-emits tuples
$(r, u)$ and $(r, (u, v))$ for each vertex or edge received as a value. For
tuples of the form $(r, u, d_{r',u})$, reducer $r$ aggregates the partial degree
sums of $u$ to compute the full degree $\gdeg{G}{u}$ of $u$ in $G$, and emits
the tuple $(r, u, \gdeg{G}{u})$.

\item\textbf{Map 3:} In Map phase 3, mappers once again load-balance tuples
containing vertices or edges as values across the reducers as in Map phases 1
and 2. For each tuples of the form $(r, u, \gdeg{G}{u})$, a mapper emits $n_r$
tuples $(r_1, u, \gdeg{G}{u})$,
$(r_1, u, \gdeg{G}{u})$, \ldots, $(r_{n_r}, u, \gdeg{G}{u})$ -- one for each
reducer. Thus, for each reducer, exactly $n$ tuples containing (full) degree
information are emitted -- one for each vertex of $G$.

\item\textbf{Reduce 3:} In Reduce phase 3, a reducer $r$ now has access to the
degrees of all vertices of $G$ and can thus compute the partition function $F_0$
defined earlier.
Then, for each node $u$ received, a reducer $r$ outputs the tuple $(r, F_0(u), u)$, and for each edge
$(u, v)$ received, a reducer $r$ outputs the tuples $(r, F_0(u), (u, v))$ and
$(r, F_0(v), (u, v))$.

\item In addition to ``packaging'' the vertex and edge information of $G$ so
that incident edges of a node $u$ can be collected at the reducer $F_0(u)$
assigned to simulate computation at $u$, reducers must also emit tuples which
allow both (i) the currently collected degrees of each vertex in $G$ and (ii)
the partition function $F_0$ to be propagated forward through the rounds of the
MapReduce computation. Fortunately this is straightforward: for each degree
tuple $(r, u, \gdeg{G}{u})$ received by reducer $r$, reducer $r$ re-emits the
same tuple. As well, $F_0: V \longrightarrow \{1, \ldots, n_r\}$ can be fully
described by $n$ pairs $(v, F_0(v))$, and so reducer $r$ emits the $n$ tuples
$(r, v, F_0(v))$, which will allow reducer $r$ to ``remember'' the partition
function $F_0(\cdot)$ in the next round. Observe that the totality of the memory
required to support knowledge of the partition function and all degrees in $G$
is $O(n)$, and thus fits into the memory of a reducer without any trouble.

\item\textbf{Map 4:} Finally, in Map phase 4, a mapper receives and processes
two different tuple formats: (i) tuples of the form $(r, r', z)$, where $r'$ is
another reducer index and $z$ is some information (of length $O(1)$ words)
representing either a vertex or an edge; and (ii) tuples of the form
$(r, v, z)$, where $v$ is a vertex identifier and $z$ is either a degree value
or a reducer identifier. In case (i) (tuples of the form $(r, r', z)$), a mapper
emits the tuple $(r', z)$. In case (ii) (tuples of the form $(r, v, z)$, a
mapper simply outputs the same tuple $(r, v, z)$ unchanged.

\item After the Map phase of the round 4 of the MapReduce computation has
completed, the Initialization phase is complete, and the simulation of
$\mathcal{A}_{CC}$ is ready to begin.
\end{itemize}

\noindent\textbf{Simulation} stage. At a high level, a Reduce phase serves as the
``local computation'' phase of the Congested Clique simulation, whereas a Map
phase (together with the subsequent shuffle phase) serves as the
``communication'' phase of the simulation. However, there is, in general, a
constant-factor slow-down because it may be that the sending and receiving of
messages in $\mathcal{A}_{CC}$ could cause the subset of nodes assigned to 
a reducer to
aggregate more than $O(n^{1+\epsilon})$ memory, necessitating a re-partitioning
of the nodes among the reducers so as not to violate the memory-per-machine
constraint. 

Recall that $I_u^{(i)}$ denotes the set of memory addresses used by a node $u$
in round $i$ of $\mathcal{A}_{CC}$.
Let $h_{u, j}^{(i)}$ be the value of word $j \in I_u^{(i)}$
in the memory of node $u$ after node $u$ has completed local computation 
in round $i$ of $\mathcal{A}_{CC}$, but before messages have been sent 
and received in this round. 
For $i > 0$, define a tuple set
\[\mathcal{H}_u^{(i)} = \{(F_i(u), (u, i, h_{u,j}^{(i)})) \colon j \in I_u^{(i)}\}\]
where $F_i(\cdot)$ is the partition function used in round $i$. 
Like $F_0$, defined in the Initialization stage, $F_i$ partitions
$G$ into $n_r$ groups, one per reducer, so that reducer memory constraints
are not violated in round $i$.
The collection of tuples $\mathcal{H}_u^{(i-1)}$ is a representation, in the
MapReduce key-value format, of the information necessary
to simulate the computations of node $u$ in round $i$ of the Congested Clique
algorithm $\mathcal{A}_{CC}$. 
The use of $F_i(u)$ as the key in each of the tuples in $\mathcal{H}_u^{(i)}$ 
ensures that all information needed to simulate a local computation at $u$ in 
$\mathcal{A}_{CC}$ goes to the same reducer.
Additionally, note that the inclusion of the identifier of $u$ with the values allows the
words from $u$'s memory to be reassembled and distinguished from information
associated with other nodes $v \in F_i^{-1}(u)$.
We assume that $\mathcal{H}_u^{(0)}$ is the information in tuple format 
that node $u$ has initially about graph $G$.
In other words, $\mathcal{H}_u^{(0)} = \{(F_0(u), u)\} \cup \{(F_0(u), (u, v)) : v \mbox{ is a
neighbor of } u\}$.

Once an initial partition function $F_0(\cdot)$ has been computed and the initial
collections $\mathcal{H}_u^{(0)}$ have been assembled 
the main goals of our
simulation algorithm are to (i) provide a mechanism for transforming
$\mathcal{H}_u^{(i-1)}$ into $\mathcal{H}_u^{(i)}$ during the reduce phase of a
MapReduce round; and (ii) provide a means of transmitting messages to reducers
of a subsequent round (corresponding to messages transmitted in the Congested
Clique at the end of each round). Since we assume messages to be sent and
received after local computation has occurred during a Congested Clique round,
$\mathcal{M}_u^{(i)}$ can be determined from $\mathcal{H}_u^{(i)}$; in turn,
$\mathcal{H}_u^{(i)}$ is a function of $\mathcal{H}_u^{(i-1)}$ and
$\overline{\mathcal{M}}_u^{(i-1)}$.

We describe the details of the simulation of a single round (round $i$) of a
Congested Clique algorithm $\mathcal{A}_{CC}$ below. Let $j = 3i-1$.
Round $i$ of $\mathcal{A}_{CC}$ is simulated by 
three MapReduce rounds (a total of six Map or Reduce phases) -- Reduce $j-1$, Map $j$, Reduce $j$, Map $j+1$,
Reduce $j+1$, and Map $j+2$.
We assume inductively that as input to Reduce
phase $j-1$ below, each reducer receives, in addition to data tuples, $O(n)$
metadata tuples containing a description of a partition function
$F_{i-1}(\cdot)$ such that for each $r$,
$\sum_{u \in P_r} (|\mathcal{H}_u^{(i-1)}| + |\overline{\mathcal{M}}_u^{(i-1)}|) = O(n^{1+\epsilon})$,
where $P_r = F_{i-1}^{-1}(r)$.

\begin{itemize}
\item\textbf{Reduce phase $j-1$:} In Reduce phase $j-1$, a reducer $r$ receives input
consisting of $\mathcal{H}_u^{(i-1)}$ together with
$\overline{\mathcal{M}}_u^{(i-1)}$ for each $u \in P_r$; for each such $u$,
reducer $r$ performs the following steps:
\begin{itemize}
\item[(i)] Reducer $r$ simulates the local computation of Round $i$ of $\mathcal{A}_{CC}$ at $u$.
\item[(ii)] Reducer $r$ computes $\mathcal{H}_u^{(i)}$ from $\mathcal{H}_u^{(i-1)}$
and $\overline{\mathcal{M}}_u^{(i-1)}$, \textit{but does not yet output} any
tuples of $\mathcal{H}_u^{(i)}$; rather, reducer $r$ outputs only a tuple
$(r, u, s_u)$ containing the size of the information
$s_u = |\mathcal{H}_u^{(i)}|$.
\item[(iii)] Reducer $r$ computes $\mathcal{M}_u^{(i)}$ from $\mathcal{H}_u^{(i)}$,
but again, \textit{does not output} any elements of $\mathcal{M}_u^{(i)}$.
Reducer $r$ then computes, for each $v \in V$, the aggregate count $c_{r,v}$ of
messages emanating from nodes in $P_r$ and destined for $v$, and outputs the
tuple $(r, v, c_{r,v})$.
\item[(iv)] Reducer $r$ outputs the exact same tuples it received as input,
$\mathcal{H}_u^{(i-1)}$ and $\overline{\mathcal{M}}_u^{(i-1)}$. 
\end{itemize}

\item\textbf{Map phase $j$:} Before message tuples can be generated and aggregated (as
a collection $\overline{M}_u^{(i)}$ at reducer $F(u)$) a rebalancing of the
nodes to reducers must be performed to ensure that the reducer-memory constraint
is not violated. In Map phase $j$, a mapper forwards tuples from either a
$\mathcal{H}_u^{(i-1)}$ or a $\overline{\mathcal{M}}_u^{(i-1)}$ through
unchanged. However, for each tuple of the form $(r, u, c_{r,u})$, a mapper
outputs the tuple $(u \bmod n_r, u, c_{r,u})$. 
In addition, for each tuple of the form $(r, u, s_u)$, a
mapper outputs $n_r$ tuples $(r', u, s_u)$ -- one for each reducer $r'$ -- so
that every reducer can know the future size of $\mathcal{H}_u^{(i)}$.

\item\textbf{Reduce phase $j$:} In Reduce phase $j$, a reducer $r$ receives as input
nearly the exact same input (and output) of reducer $r$ in the previous
MapReduce round -- the union of $\mathcal{H}_u^{(i-1)}$ and
$\overline{\mathcal{M}}_u^{(i-1)}$ for each $u \in P_r$ -- except that instead
of receiving tuples of the form $(r, u, c_{r,u})$ for each $u \in V$, reducer
$r$ receives \textit{all} partial message counts for the subset of vertices $u$
for which $u \bmod n_r = r$; as well, each reducer receives $n$ tuples of the
form $(r, u, s_u)$ describing the amount of memory required by node $u$ in round
$i$ of $\mathcal{A}_{CC}$. Reducer $r$ aggregates tuples of the form
$(u \bmod n_r, u, c_{r,u})$ and outputs $(r, u, |\overline{M}_u^{(i)}|)$, since
$|\overline{M}_u^{(i)}|$ is precisely the sum of the partial message counts
$c_{r,u}$. (Notice that a reducer $r$ receives $O(n)$ such tuples.) Reducer $r$
forwards all other tuples through unchanged to the next MapReduce round.

\item\textbf{Map phase $j+1$:} In Map phase $j+1$, a mapper continues to forward all
tuples through unchanged to Reduce phase $j+1$, except that for each tuple of the form
$(r, u, |\overline{M}_u^{(i)}|)$, a mapper outputs $n_r$ tuples
$(r', u, |\overline{M}_u^{(i)}|)$ -- one for each reducer $r'$. In this way,
each reducer in Reducer phase $j+1$ can come to know all $n$ message counts
for each node $u \in V$.

\item\textbf{Reduce phase $j+1$:} In Reduce phase $j+1$, each reducer receives all $n$
message counts (for each node $u \in V$) in addition to the sizes $s_u$ of
the state needed by each node $u$ in round $i$ of $\mathcal{A}_{CC}$. Each
reducer thus has enough information to determine the next partition function
$F_i: V \longrightarrow \{1, \ldots, n_r\}$, defined by
$$F_i(x) = \begin{cases} 1, & \mbox{if } x = 1\\ F_i(x-1),
& \mbox{if } \sum_{v \in L(x)} (s_v + |\overline{M}_v^{(i)}|) \leq n^{1+\epsilon},\\
F_i(x-1)+1, & \mbox{otherwise}
\end{cases}
$$
Here $L(x) = \{v \mid v < x\mbox{ and }F_i(v) = F_i(x-1)\}$.
After determination of the new partition function $F_i$, reducers are now able
to successfully output the ``packaged memory'' $\mathcal{H}_u^{(i)}$ of round
$i$ of $\mathcal{A}_{CC}$, as well as the new messages $m_{u,v}^{(i)}$
sent in round $i$, because the new partition function $F_i$ is specifically
designed to correctly load-balance these tuple sets across the reducers while
satisfying the memory constraint. Therefore:
\begin{itemize}
\item[(i)] Reducer $r$ now simulates the local computation at each $u \in P_r$
and thus outputs the set $\mathcal{H}_u^{(i)}$ (which can be computed from
$\mathcal{H}_u^{(i-1)}$ and $\overline{\mathcal{M}}_u^{(i-1)}$). It is important
to recall here that because mappers operate on key-value pairs one at a time in
the MapReduce model, there is no restriction on the size of the output from any
reducer $r$ in any MapReduce round (other than that it be polynomial).
\cite{KarloffSuriVassilvitskii} Therefore, a reducer $r$ may output (and thus
free-up its memory) each tuple set $\mathcal{H}_u^{(i)}$ as it is created (as
reducer $r$ processes the nodes in $P_r$ one at a time), and so there is no
concern about reducer $r$ attempting to maintain in memory all sets
$\mathcal{H}_u^{(i)}$ for $u \in P_r$ at once. Note that $\mathcal{H}_u^{(i)}$,
as generated by a reducer $r$, should contain tuples of the form
$(r, F_i(u), u, h_{u,l}^{(i)})$ so that mappers in MapReduce round $j+2$ can
correctly deliver $\mathcal{H}_u^{(i)}$ to reducer $F_i(u)$.
Recall that $h_{u, l}^{(i)}$ denotes the contents of the word with address $l$
in node $u$'s memory at the end of local computation in round $i$.
\item[(ii)] As a reducer $r$ processes, and simulates the computation at, each
node $u \in P_r$ one at a time, generating $\mathcal{H}_u^{(i)}$, reducer $r$
also uses $\mathcal{H}_u^{(i)}$ to generate the messages $M_u^{(i)}$ to be sent
by node $u$ in round $i$ of $\mathcal{A}_{CC}$. Reducer $r$ encapsulates
$M_u^{(i)}$ in the tuple set $\mathcal{M}_u^{(i)}$ and outputs it alongside
$\mathcal{H}_u^{(i)}$ before moving on to the next node in $P_r$. As with
$\mathcal{H}_u^{(i)}$, tuples in $\mathcal{M}_u^{(i)}$ should initially be
generated by a reducer $r$ in the form $(r, F_i(v), u, v, m_{u,v}^{(i)})$ so
that mappers in MapReduce round $j+2$ can correctly deliver the set
$\overline{\mathcal{M}}_v^{(i)}$ to reducer $F_i(v)$.
\item[(iii)] Lastly regarding the simulation procedure, whenever a node
$u \in P_r$ being simulated broadcasts a message $b_u^{(i)}$, reducer $r$
outputs the tuple $(r, u, b_u^{(i)})$.
\item[(iv)] After simulation of each node $u \in P_r$ is complete, reducer $r$
also outputs a description of the new partition function $F_i$.
\end{itemize}

\item\textbf{Map $j+2$:} In Map phase $j+2$, a mapper simply transforms the key
in a data tuple as appropriate: for each tuple
$(r, F_i(u), u, h_{u,l}^{(i)})$, a mapper simply emits the tuple
$(F_i(u), u, h_{u,l}^{(i)})$; for each tuple
$(r, F_i(v), u, v, m_{u,v}^{(i)})$, a mapper simply emits the tuple
$(F_i(v), u, v, m_{u,v}^{(i)})$. The exception to this is that tuples
$(r, u, b_u^{(i)})$ containing broadcast messages are expanded: for each, a
mapper emits $n_r$ tuples $(r', u, b_u^{(i)})$ -- one for each reducer $r'$ --
so that every reducer in Reducer phase $j+2$ receives a single copy of each
message broadcast during round $i$ of $\mathcal{A}_{CC}$.

\item Tuples carrying metadata describing the (new) partition function $F_i$ are
forwarded unchanged, because there already exists one copy of each such metadata
tuple for each reducer, and there need be only one such copy per reducer as
well. After Map phase $j+2$, tuples from the sets $\mathcal{H}_u^{(i)}$ and
$\overline{\mathcal{M}}_u^{(i)}$ have been emitted with keys $F_i(u)$, and for
each broadcast message $b_u^{(i)}$, one tuple containing a copy of $b_u^{(i)}$
has been emitted for each reducer as well; thus, in Reduce phase $j+2$,
simulation of round $i+1$ of algorithm $\mathcal{A}_{CC}$ can begin.
\end{itemize}

It remains to comment on the memory-per-machine constraint which must be
satisfied during each MapReduce round. Observe that, inductively, for each $r$,
the sum $\sum_{u \in P_r} (|\mathcal{H}_u^{(i-1)}| + |\overline{\mathcal{M}}_u^{(i-1)}|) = O(n^{1+\epsilon})$.
These data tuples are forwarded unchanged until Reduce phase $j+1$, in which the
new partition function $F_i(\cdot)$ for the next round of simulation is
computed, and then collectively $\mathcal{H}_u^{(i-1)}$ and
$\overline{\mathcal{M}}_u^{(i-1)}$ are transformed into $\mathcal{H}_u^{(i)}$
and $\mathcal{M}_u^{(i)}$. By construction of the partition functions $F_{i-1}$
and $F_i$, and by the assumption that $\mathcal{A}_{CC}$ is a
$(n^{1+c}, n^{1+\epsilon})$-\textit{lightweight} algorithm, it follows that
these data tuples are never present on any reducer a number that exceeds
$\Theta(n^{1+\epsilon})$. Secondly, it should be mentioned that because
broadcast messages are not duplicated at any reducer $r$, no reducer will ever
receive more than $n = O(n^{1+\epsilon})$ tuples containing broadcast messages.
Thirdly, tuples containing state or message counts are never present in a number
exceeding $n$ at any reducer, and \textit{partial} message counts are explicitly
load-balanced so that only $O(n)$ such information is passed to a single reducer
as well. Finally, metadata tuples describing a partition function never exceed
$\Theta(n)$ on any reducer because the domain of each partition function has
size $n$.
\end{proof}

\section{Coloring in the MapReduce Framework}
\label{sect:MapReduceFramework}

Using the simulation theorem of Section \ref{sect:MRfromCC}, we can simulate
Algorithm \textsc{HighDegCol} in the MapReduce model and thereby achieve an
$O(\Delta)$-coloring MapReduce algorithm running in expected-$O(1)$ rounds. As
in Lattanzi et al.~\cite{LattanziFiltering}, we consider graphs with
$\Omega(n^{1+c})$ edges, $c > 0$.

\begin{theorem}
When the input graph $G$ has $\Omega(n^{1+c})$ edges, and $0 \leq \epsilon < c$, there exists an
$O(\Delta)$-coloring algorithm running in the MapReduce model with
$\Theta(n^{c-\epsilon})$ machines and $\Theta(n^{1+\epsilon})$ memory per
machine, and having an expected running time of $O(1)$ rounds.
\label{theorem:DistColorMR}
\end{theorem}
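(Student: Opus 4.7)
The plan is to apply the simulation theorem (Theorem \ref{theorem:MRfromCC}) directly to Algorithm \textsc{HighDegCol}. Since $m = \Omega(n^{1+c})$ with $c > 0$ a constant, the average degree—and hence the maximum degree $\Delta$—is $\Omega(n^c) = \omega(\log^4 n)$, so we are in the high-degree regime to which \textsc{HighDegCol} applies. By Lemma \ref{lemma:constantCommunication}, \textsc{HighDegCol} terminates in $T = O(1)$ rounds in expectation, and by Lemma \ref{lemma:deltaColors} it produces an $O(\Delta)$-coloring. Thus, once the lightweight condition is checked, Theorem \ref{theorem:MRfromCC} immediately yields the desired $O(1)$-round $O(\Delta)$-coloring MapReduce algorithm on $\Theta(n^{c-\epsilon})$ machines with $\Theta(n^{1+\epsilon})$ memory each.

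The substantive work is to verify that \textsc{HighDegCol} is $(n^{1+c}, n^{1+\epsilon})$-lightweight. Condition (iii) is immediate: the only nontrivial local computations—greedy coloring of a good color-group subgraph, or of the residual graph—run in polynomial time. For condition (ii), I would observe that no node ever holds more than (a) a single good color-group subgraph (at most $n$ edges, by the definition of ``good''), (b) the residual graph (of size $O(n)$, by Lemma \ref{lemma:residual_graph_size}), or (c) an $O(n)$-sized aggregate of color-group/degree information at node~$1$ during Steps 7--8. All of these are $O(n) \subseteq O(n^{1+\epsilon})$ since $\epsilon \ge 0$, and after relabeling the in-use memory addresses form a consecutive block, satisfying the address-range requirement.

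Condition (i), which demands $\sum_u (|\overline{M}_u^{(j)}| + |I_u^{(j)}|) = O(n^{1+c})$ in each round, is the most delicate and I expect it to be the main obstacle. The key observation is that the broadcast exclusion built into the definition of $\overline{M}_u^{(j)}$ wipes out the naively quadratic contributions of Steps 1, 9, and 10 (degree broadcasts, the ``break'' signal, and good-group notifications). After that, the remaining contributions are (a) the distributed input graph itself, of total size $\Theta(m) = O(n^{1+c})$; (b) the per-edge exchange of color-group identifiers in Step 6, which also totals $O(m)$; and (c) the shipment via Lenzen's routing of $O(\Delta / \log n)$ good color-group subgraphs of $O(n)$ words apiece in Step 11, totaling $O(n \Delta / \log n) = O(m/\log n)$. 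The residual-graph shipment in Step 13 and the aggregates at node~$1$ each contribute only $O(n)$. Summing yields the required $O(n^{1+c})$ bound per round, so Theorem \ref{theorem:MRfromCC} applies with $T = O(1)$ and delivers the claimed MapReduce algorithm.
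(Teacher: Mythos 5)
Your proposal is correct and follows the same route as the paper: verify that \textsc{HighDegCol} is $(n^{1+c}, n)$-lightweight and invoke the simulation theorem (you check the three conditions in more detail than the paper, which merely asserts them by inspection of the pseudocode, and you also make explicit the implicit point that $m = \Omega(n^{1+c})$ forces $\Delta \ge \log^4 n$ so that the high-degree algorithm is the one being simulated). One small correction to your accounting for Step 11: the claimed bound $O(n\Delta/\log n) = O(m/\log n)$ is backwards (from $m \le n\Delta/2$ one gets $n\Delta/\log n \ge 2m/\log n$, and $n\Delta/\log n$ need not be $O(n^{1+c})$ when $\Delta$ is large); the correct argument is that the good color-group subgraphs are induced by a partition of $V$ and hence edge-disjoint, so their total description length is $O(n+m) = O(n^{1+c})$.
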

\begin{proof}
It is easy to examine the lines of code in Algorithm \textsc{HighDegCol} to ascertain
that the total amount of non-broadcast communication in any round in bounded above by
$O(n^{1+c})$.
Specifically, the total non-broadcast communication corresponding to only two lines of code
-- Lines 6 and 11 -- can be as high as $\Theta(n^{1+c})$. For all other lines of code,
the volume of total non-broadcast communication is bounded by $O(n)$.
Similarly, it is easy to examine the lines of code in Algorithm \textsc{HighDegCol} to verify
that the total memory (in words) used by all nodes for their local computations in any one round
is bounded above by $O(n^{1+c})$.
Finally, it is also easy to verify that the maximum amount of memory used by a node in any round of computation is $O(n)$.

Thus, Algorithm \textsc{HighDegCol} is an $(n^{1+c}, n)$-lightweight algorithm on a Congested Clique
and applying the Simulation Theorem (Theorem \ref{theorem:CCColoring}) to this algorithm yields
the claimed result.
\end{proof}

\noindent
It is worth emphasizing that the result holds even when $\epsilon = 0$; in other
words, even when the per machine memory is $O(n)$, the algorithm can compute an
$O(\Delta)$-coloring in $O(1)$ rounds. This is in contrast with the results in
Lattanzi et al.~\cite{LattanziFiltering}, where $O(1)$-round algorithms were
obtained (e.g., for maximal matching) with $n^{1 + \epsilon}$ per machine memory, only when $\epsilon > 0$.
In their work, setting $\epsilon = 0$ (i.e., using $\Theta(n)$ memory per machine) 
resulted in $O(\log n)$ round algorithms.

We end with the following corollary that is an immediate consequence of Theorem \ref{theorem:DistColorMR}.

\begin{corollary}
The problem of computing an $O(\Delta)$-coloring for an $n$-node graph with
maximum degree $\Delta$ and at least $\Omega(n^{1+c})$ edges, for $c > 0$ is in
$\mathcal{MRC}^0$.
\end{corollary}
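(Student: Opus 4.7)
The plan is to show that the algorithm guaranteed by Theorem~\ref{theorem:DistColorMR} meets each of the four defining constraints of $\mathcal{MRC}^0$ when the input is an $n$-node graph with $m = \Omega(n^{1+c})$ edges. Since the input size is $N = \Theta(m) = \Theta(n^{1+c})$, the membership conditions (substantially sublinear number of machines, substantially sublinear memory per machine, polylogarithmic rounds, polynomial-time local work) must be checked with respect to $N$, not $n$.

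First I would apply Theorem~\ref{theorem:DistColorMR} with $\epsilon = 0$, which yields an $O(\Delta)$-coloring MapReduce algorithm using $\Theta(n^{c})$ machines and $\Theta(n)$ memory per machine, running in $O(1)$ expected rounds. Then I would verify the constraints one by one: (i) the number of machines satisfies $n^{c} = N^{c/(1+c)} = O(N^{1-\delta_1})$ for $\delta_1 = 1/(1+c) > 0$; (ii) the memory per machine satisfies $n = N^{1/(1+c)} = O(N^{1-\delta_2})$ for $\delta_2 = c/(1+c) > 0$; (iii) the round count is $O(1) = O(\log^0 N)$, placing it in $\mathcal{MRC}^0$; and (iv) local computation at each mapper and reducer is polynomial time, which is inherited from the lightweight property of Algorithm \textsc{HighDegCol} and preserved by the simulation of Theorem~\ref{theorem:MRfromCC}.

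I do not expect any substantive obstacle here, as the corollary is essentially a bookkeeping exercise: Theorem~\ref{theorem:DistColorMR} already supplies the algorithm with the right round complexity, and the only remaining work is to rewrite the resource bounds in terms of the input size $N$ rather than the vertex count $n$, confirming that both the machine count and the per-machine memory are bounded by a fixed polynomial fraction of $N$.
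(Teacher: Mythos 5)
Your proposal is correct and follows exactly the route the paper intends: the paper states this corollary as an immediate consequence of Theorem~\ref{theorem:DistColorMR} and gives no further argument, while you simply make the implicit bookkeeping explicit by re-expressing the machine count and per-machine memory in terms of the true input size $N = \Theta(n^{1+c})$ and checking the four $\mathcal{MRC}^0$ conditions. Nothing is missing relative to the paper's treatment.
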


\section{Conclusions}
\label{sect:Conclusions}

The results in this paper connect two models that are usually studied by
different research communities. In general, it would be interesting to see if this
connection has benefits beyond those discussed in the paper. 
Also, it would be be interesting to study differences between these two models.
For example, the Congested Clique model allows nodes to remember arbitrary amount of 
information from one round to the next. Does this give the Congested Clique model
a provable advantage over the ``stateless'' MapReduce model?

For the ``small $\Delta$'' case, i.e., when $\Delta = O(\mbox{poly}(\log n))$, our paper presents an $O(\llln)$-round $(\Delta+1)$-coloring algorithm
on a Congested Clique.
One question that interests us is whether $O(1)$ rounds will suffice to
compute an $O(\Delta)$-coloring even when $\Delta$ is small?

Following the lead of Lattanzi et al.~\cite{LattanziFiltering}, we have assumed that each machine in the MapReduce
model contains at least $\Omega(n)$ memory for processing an $n$-node graph.
Relaxing this assumption is interesting and leads to the question of 
whether for some $\epsilon > 0$, $O(1)$ MapReduce rounds would suffice to compute an
$O(\Delta)$-coloring, even when the per machine memory is
$O(n^{1-\epsilon})$.

\bibliographystyle{plain}

\begin{thebibliography}{10}

\bibitem{BEPS12}
Leonid Barenboim, Michael Elkin, Seth Pettie, and Johannes Schneider.
\newblock The locality of distributed symmetry breaking.
\newblock In {\em Proc. of IEEE FOCS}, 2012.

\bibitem{berns2012facloc}
Andrew Berns, James Hegeman, and Sriram~V. Pemmaraju.
\newblock {Super-Fast Distributed Algorithms for Metric Facility Location.}
\newblock In Artur Czumaj, Kurt Mehlhorn, Andrew~M. Pitts, and Roger
  Wattenhofer, editors, {\em {ICALP (2) }}, volume 7392 of {\em {Lecture Notes
  in Computer Science}}, pages 428--439. Springer, 2012.

\bibitem{berns2012arxiv}
Andrew Berns, James Hegeman, and Sriram~V. Pemmaraju.
\newblock {Super-Fast Distributed Algorithms for Metric Facility Location}.
\newblock {\em CoRR}, abs/1308.2473, August 2013.

\bibitem{DeanGhemavat}
Jeffrey Dean and Sanjay Ghemawat.
\newblock Mapreduce: Simplified data processing on large clusters.
\newblock {\em Commun. ACM}, 51(1):107--113, January 2008.

\bibitem{DolevLenzenPeled}
Danny Dolev, Christoph Lenzen, and Shir Peled.
\newblock "tri, tri again": Finding triangles and small subgraphs in a
  distributed setting.
\newblock In {\em Proceedings of the 26th International Conference on
  Distributed Computing}, DISC'12, pages 195--209, Berlin, Heidelberg, 2012.
  Springer-Verlag.

\bibitem{EneClusteringMapReduce}
Alina Ene, Sungjin Im, and Benjamin Moseley.
\newblock Fast clustering using mapreduce.
\newblock In {\em Proceedings of the 17th ACM SIGKDD International Conference
  on Knowledge Discovery and Data Mining}, KDD '11, pages 681--689, New York,
  NY, USA, 2011. ACM.

\bibitem{GehweilerSPAA2006}
Joachim Gehweiler, Christiane Lammersen, and Christian Sohler.
\newblock {A Distributed O(1)-approximation Algorithm for the Uniform Facility
  Location Problem}.
\newblock In {\em Proceedings of the Eighteenth Annual ACM Symposium on
  Parallelism in Algorithms and Architectures}, SPAA '06, pages 237--243, New
  York, NY, USA, 2006. ACM.

\bibitem{Johansson}
\"{O}jvind Johansson.
\newblock Simple distributed $(\delta + 1)$-coloring of graphs.
\newblock {\em Inf. Process. Lett.}, 70(5):229--232, June 1999.

\bibitem{KarloffSuriVassilvitskii}
Howard Karloff, Siddharth Suri, and Sergei Vassilvitskii.
\newblock A model of computation for mapreduce.
\newblock In {\em Proceedings of the Twenty-first Annual ACM-SIAM Symposium on
  Discrete Algorithms}, SODA '10, pages 938--948, Philadelphia, PA, USA, 2010.
  Society for Industrial and Applied Mathematics.

\bibitem{KumarGreedyMapReduce}
Ravi Kumar, Benjamin Moseley, Sergei Vassilvitskii, and Andrea Vattani.
\newblock Fast greedy algorithms in mapreduce and streaming.
\newblock In {\em Proceedings of the Twenty-fifth Annual ACM Symposium on
  Parallelism in Algorithms and Architectures}, SPAA '13, pages 1--10, New
  York, NY, USA, 2013. ACM.

\bibitem{LattanziFiltering}
Silvio Lattanzi, Benjamin Moseley, Siddharth Suri, and Sergei Vassilvitskii.
\newblock Filtering: A method for solving graph problems in mapreduce.
\newblock In {\em Proceedings of the Twenty-third Annual ACM Symposium on
  Parallelism in Algorithms and Architectures}, SPAA '11, pages 85--94, New
  York, NY, USA, 2011. ACM.

\bibitem{lenzen2013routing}
Christoph Lenzen.
\newblock {Optimal Deterministic Routing and Sorting on the Congested Clique}.
\newblock In {\em Proceedings of the 2013 ACM Symposium on Principles of
  Distributed Computing}, PODC '13, pages 42--50, 2013.

\bibitem{lotker2006distributed}
Zvi Lotker, Boaz Patt-Shamir, and David Peleg.
\newblock {Distributed MST for Constant Diameter Graphs}.
\newblock {\em Distributed Computing}, 18(6):453--460, 2006.

\bibitem{Patt-ShamirTeplitsky}
Boaz Patt-Shamir and Marat Teplitsky.
\newblock The round complexity of distributed sorting: Extended abstract.
\newblock In {\em Proceedings of the 30th Annual ACM SIGACT-SIGOPS Symposium on
  Principles of Distributed Computing}, PODC '11, pages 249--256, New York, NY,
  USA, 2011. ACM.

\bibitem{peleg2000distributed}
David Peleg.
\newblock {\em {Distributed Computing: A Locality-Sensitive Approach}},
  volume~5.
\newblock Society for Industrial Mathematics, 2000.

\end{thebibliography}

\end{document}